\pdfoutput=1 % force arXiv to use pdflatex

\documentclass[
  a4paper,
  UKenglish,
  cleveref,
  autoref,
  thm-restate
]{lipics-v2021}

\hideLIPIcs % removes LIPIcs banner/DOI for arXiv preprint

\newif\ifOL
\OLfalse

\usepackage[T1]{fontenc}
\usepackage[utf8]{inputenc}

\usepackage{booktabs}
\usepackage{complexity}
\usepackage{tikz}
\usetikzlibrary{shapes,arrows.meta,positioning,calc}

\usepackage{xspace}
\usepackage[most]{tcolorbox}
\usepackage{tabularx}
\usepackage{pifont}
\usepackage{boxedminipage}

\definecolor{Carmine}{RGB}{150,0,24}
\definecolor{LightBrownDrab}{RGB}{181,165,135}
\definecolor{BrownDrab}{RGB}{150,113,23}

\newcommand{\cmark}{\ding{51}}

\newcommand{\mcc}{\textsc{Multicolored Clique}\xspace}

\newcommand{\msr}{\textsc{Min-Sum-Radii}}

\newcommand{\exact}{\textsc{Exact-}\textsc{MSR}}
\newcommand{\fixed}{\textsc{Allowed-Centers-}\textsc{MSR}}

\newcommand{\yes}{\normalfont\textsc{Yes}\xspace}

\newcommand{\highl}{+}
\newcommand{\lowl}{-}

\newcommand{\cost}{\Delta}
\newcommand{\CCC}{\mathcal{C}}
\newcommand{\bigoh}{\mathcal{O}}

\newcommand{\RR}{\mathbb{R}}

\newcommand{\SM}{{\;{|}\;}}
\renewcommand{\SE}{\,\}}
\newcommand{\SB}{\{\,}

\newcommand{\td}{\textsf{td}}
\newcommand{\tw}{\textsf{tw}}
\newcommand{\vc}{\textsf{vc}}
\newcommand{\fvs}{\textsf{fvs}}

\newcommand{\MSR}{MSR}

\newcommand{\high}{\textsc{$x^i_{\highl}$}}
\newcommand{\low}{\textsc{$x^i_{\lowl}$}}

\newcommand{\ETH}{\ensuremath{\mathsf{ETH}}\xspace}

\newcommand{\pbDef}[3]{%
  \noindent
  \begin{center}
  \begin{boxedminipage}{0.98\columnwidth}
  {\sc #1}\\[5pt]
  \begin{tabular}{l p{0.70\columnwidth}}
  {\sc Instance}: & #2\\
  {\sc Question}: & #3
  \end{tabular}
  \end{boxedminipage}
  \end{center}
}

\newtcolorbox{intuitionbox}{
  colback=gray!10,
  boxrule=0pt,
  frame hidden,
  left=6pt,
  right=6pt,
  top=6pt,
  bottom=6pt,
  arc=2pt
}

\newcommand{\pbDefP}[4]{%
  \noindent
  \begin{center}
  \begin{boxedminipage}{\columnwidth}
  {\sc #1}\\[5pt]
  \begin{tabular}{lp{0.8\columnwidth}}
  {\sc Instance}: & #2\\
  {\sc Parameter}: & #3\\
  {\sc Question}: & #4
  \end{tabular}
  \end{boxedminipage}
  \end{center}
}

\title{On the Parameterized Complexity of \msr}

\author{Pankaj Kumar}
{University of Birmingham, UK}
{pundir.pankaj25@gmail.com}
{}
{}

\author{Haiko M\"uller}
{University of Leeds, UK}
{H.Muller@leeds.ac.uk}
{}
{}

\author{Sebastian Ordyniak}
{University of Leeds, UK}
{sordyniak@gmail.com}
{}
{}

\author{Melanie Schmidt}
{Heinrich Heine University D\"usseldorf, Germany}
{mschmidt@hhu.de}
{}
{Melanie Schmidt acknowledges funding by DFG grant 456558332.}

\authorrunning{P. Kumar, H. M\"uller, S. Ordyniak, and M. Schmidt}

\Copyright{
Pankaj Kumar,
Haiko M\"uller,
Sebastian Ordyniak,
and Melanie Schmidt
}

\ccsdesc{Theory of computation~Parameterized complexity and exact algorithms}

\keywords{
Parameterized complexity,
Min-Sum-Radii clustering
}
\relatedversion{
An extended abstract of this paper will appear in the proceedings of the
20th Scandinavian Symposium on Algorithm Theory (SWAT 2026).
}
\nolinenumbers

\begin{document}

\maketitle

\begin{abstract}
In the \msr\ (\MSR) clustering problem, we are given a finite set $X$ of $n$ points in a metric space. The objective is to find at most $k$ clusters centered at a subset of these points such that every point of $X$ is assigned to one of the clusters, minimizing the sum of the radii of the clusters. 
The problem is known to be \NP-hard even on metrics induced by weighted planar graphs and metrics with constant doubling dimension, 
as shown by Gibson et al.~(SWAT 2008). In this work, we investigate the parameterized complexity of \MSR\ on metrics induced by undirected graphs. 
We distinguish between weighted graph metrics (with positive edge weights) 
and unweighted graph metrics (where all edges have unit weight). 

\textbf{Weighted Graph Metrics.} We show that \MSR\ is \W[1]-hard on
metrics induced by weighted bipartite graphs, when parameterized by
the combined parameter $k$ the number of clusters  and $\Delta$ the
cost of the clustering. 
We then investigate the structural
parameterized complexity of the problem. Drexler et
al.~[doi:10.48550/arXiv.2310.02130] showed that the \MSR\ problem admits an \XP\
algorithm on metrics induced by weighted graphs when parameterized by
treewidth, and asked whether this can be improved to fixed-parameter
tractability. We first answer their question in the negative, and more
strongly show that \MSR\ stays \textup{\W[1]\text{-hard}} on metrics
induced by undirected weighted bipartite graphs when parameterized by
the vertex cover number plus $k$. 
We then turn our attention to parameters for dense graphs and show
that \MSR\ remains \W[1]-hard when parameterized by $k+\Delta$
even on cliques and complete bipartite graphs. 

On the positive side, we employ the known \XP\ algorithm parameterized
by treewidth, to show that the \MSR\ problem is \FPT\ when
parameterized by the parameter treewidth plus $\Delta$. 
Together, these results provide a complete picture of the parameterized complexity of \MSR\ with respect to any combination of parameters $k$, $\cost$, as well as structural parameters for sparse graphs above vertex cover and known parameters for dense graphs (such as neighborhood diversity and modular width).

\textbf{Unweighted Graph Metrics.}
The story is rather different for unweighted graphs, since it is a long standing open question whether \MSR\ on metrics induced by undirected
graphs is solvable in polynomial-time. Although we cannot answer
this question, we provide classical and parameterized hardness results  for two very closely related problems, namely \exact\ (\MSR\ and one wants to find exactly $k$ clusters) and \fixed\ (\MSR\ with an additional set of allowed cluster centers). We also show that \MSR\ as well as these two problems are fixed-parameter tractable parameterized by the treedepth of the input graph.
\end{abstract}

\section{Introduction}\label{sec:intro} Clustering problems with centers are well
studied in the fields of operation research, machine
learning, data processing, theoretical computer science, and various other disciplines. Given a point-set $ X$ consisting of $n$  points, a distance metric $d$ defined on $X$, and a parameter $k$, the goal is to partition $X$ into $k$ clusters $C_1, C_2, \ldots, C_k$ with respective centers $c_1, c_2, \ldots, c_k$. The $k$-\textsc{Center}, $k$-\textsc{Median}, and  $k$-\textsc{Means} problems are notably the most widely studied clustering problems of this type:~\cite{stoc/CharikarP01,Gonzalez85,AryaGKMMP04,HsuN79,AgarwalP02}. The $k$-\textsc{Center} objective is to find a clustering such that the maximum distance between the center of a cluster and any point in the cluster, i.e., the maximum radius of a cluster, is minimized. The $k$-\textsc{Median} objective minimizes the total distance, which can average out larger individual costs, potentially overlooking the needs of outlier points.

The \MSR\ problem offers a compromise by targeting to minimize the sum of the radii of the clusters. We define the \MSR\ problem formally as follows.

\pbDef{Min-Sum-Radii (\MSR)}
{A finite set $X$ of $n$ points in a metric space $(X,d)$, an integer
  $k$, and a number $\cost \in \RR$.}
{Are there center-radius pairs $(c_1,r_1),\dots,(c_t,r_t)$ with $t\le k$,
  $c_i\in X$, and $r_i\ge 0$ such that every point $x\in X$ is
  of distance at most $r_i$ to some center $c_i$ (i.e., $d(x,c_i)\le
  r_i$) and $\sum_{i=1}^t r_i\leq \cost$.}

This approach allows for more balanced cluster formation, fine-tuning the radii across all clusters and ensuring that the maximum individual cost is kept within reasonable bounds,  
although it may be higher than in $k$-\textsc{Center}—potentially by a factor of $k$—though this is not always the case. This intermediate objective is particularly useful in applications where balancing the load among clusters is crucial, such as in wireless network design where the goal is to minimize the energy required for transmission, which is proportional to the sum of the radii of the coverage areas of the base stations, see Lev-Tov and Peleg~\cite{Lev-TovP05}. 

Charikar and Panigrahy \cite{stoc/CharikarP01} studied the \MSR\ problem and noted that it effectively addresses the ``dissection effect'' inherent in the $k$-center approach, where the uniform maximum radius assumption can lead to significant overlap among clusters, resulting in points that should belong together being split across different clusters. By minimizing the sum of the radii, \MSR\ reduces this dissection effect, promoting more coherent and practical clustering solutions.

Gibson et al.~\cite{swat/GibsonKKPV08} showed that the \MSR\ problem is \NP-hard, even when restricted to planar metric or metrics with constant doubling dimension. For general metrics, they presented a randomized algorithm for \MSR\ with a runtime of $n^{O(\log n \log D)}$ where $D$
represents the ratio between the maximum and minimum distances. They also obtained
a $(1 + \epsilon)$-approximation in quasi-polynomial time. It follows that under standard complexity-theoretic assumptions, the problem cannot be \APX-hard.  Interestingly, Gibson et al.~\cite{GibsonKKPV12} presented a polynomial time algorithm for \MSR\ in constant-dimensional Euclidean metrics. Charikar and Panigrahy~\cite{stoc/CharikarP01} developed a 
$O(1)$-approximation algorithm for the \MSR\ problem (as well as the $k$-\textsc{Min-Sum-Diameter} problem) on general metrics using the primal-dual framework introduced by Jain and Vazirani for the $k$-\textsc{Median} problem. 
This was further refined by Friggstad and Jamshidian~\cite{FriggstadJ22}, who achieved a $3.389$-approximation for \MSR. This was recently improved to $3+\varepsilon$, which is currently the best-known approximation factor for the general case~\cite{DBLP:conf/soda/BuchemERW24}.

The \MSR\ problem has been studied in constrained settings, such as
with lower bounds, outliers, and capacities. A significant line of
work focuses on its parameterized complexity. 
Several recent results~\cite{esa/0002V20, BandyapadhyayL023a, Jaiswal0Y24, aaai/ChenXXZ24, CartaDHR024, DBLP:conf/aaai/BanerjeeBGH25} discuss \FPT-approximation algorithms for the problem and its variants; see the related work section.

We investigate the \MSR\ problem under the \emph{graph metrics}. For an undirected graph $G=(V,E)$ with positive (or unit) edge weights,
we denote by $X=V$ the set of input points, and by $d(u,v)$, the length of a shortest $u$--$v$ path in $G$.
The metric space induced by the graph $G$ is $(X,d)$.

\pbDefP{\MSR\ on Graph Metrics}
{An undirected graph $G=(V,E)$ with positive edge weights, an integer
  $k$, and a number $\cost\in \RR$. The input
  metric $d$ is the graph metric induced by $G$.}
{$k$ (the (maximum) number of allowed clusters)}
{Are there center–radius pairs $(c_1,r_1),\dots,(c_t,r_t)$ with $t\le k$, $c_i\in V$, and $r_i\ge 0$ such that every vertex $v\in V$ is
  of distance at most $r_i$ from some center $c_i$ (i.e., $d(v,c_i)\le
  r_i$) and $\sum_{i=1}^t r_i\leq \cost$.}

    Throughout this paper, we only consider undirected graphs, and all of our results are stated for undirected graph metrics.
We distinguish between weighted graph metrics (with positive edge weights) and (unweighted) graph metrics (where all edges have unit weight).

\begin{table}[b]
\caption{Summary of complexity results for \MSR\ on weighted graph
  metrics. A checkmark (\cmark) indicates that the parameter in that
  column is used for the corresponding result. $k$: number of
  clusters; $\Delta$: cost of optimal clustering; $vc$: vertex cover
  number; $tw$: treewidth. Note that these results provide a
  comprehensive picture for the parameterized complexity of \MSR\ with
respect to any combination of these 4 parameters.}
\centering
\small
\begin{tabular}{|c|c|c|c|c|c|}
\hline
$k$ & $\cost$ & $\vc$ & $\tw$ & Complexity & Reference \\
\hline
\cmark & \cmark &  &  & \W[1] & ~\autoref{thm:MT} \\
\hline
\cmark &  & \cmark &  & \W[1] & ~\autoref{thm:vc+k} \\
\hline
\cmark &  &  &  & \XP & naive brute force \\
\hline
 &  &  & \cmark & \XP & Drexler et al.~\cite{DBLP:journals/corr/abs-2310-02130} \\
\hline
 & \cmark &  & \cmark & \FPT & ~\autoref{cor:tw+Delta} \\
\hline
\end{tabular}

\label{tab:results-summary}
\end{table}

\textbf{Weighted Graph Metrics.} 
 A naive brute-force algorithm yields an \XP\ algorithm parameterized
 by $k$: one can guess the $k$ centers (at most $\binom{n}{k}$
 choices) and then determine the optimal radii (for each radius there
 are at most $n-1$ possible choices--as there are at most $n-1$
 possible $r$-neigborhoods, i.e., neighborhoods around a vertex with
 radius $r$), leading to a running time of $n^{\bigoh(k)}$.
To the best of our knowledge, the question of whether \MSR\ admits a true \FPT\ algorithm in this setting--an algorithm running in time $f(k)\cdot n^{O(1)}$ for some computable function~$f$--has remained open.

We resolve this by proving that \MSR\ is \W[1]-hard parameterized by $k+\cost$, even on weighted bipartite graphs. Furthermore, we establish a lower bound under the Exponential Time Hypothesis (\ETH).

\begin{restatable}{theorem}{thmMT}\label{thm:MT}
    \MSR\ on metrics induced by  weighted bipartite graphs is \textup{\W[1]\text{-hard}} parameterized by the number of clusters $k$ plus the cost of the clustering $\cost$. Moreover, under the Exponential Time Hypothesis, there is no $f(k+\cost)\cdot N^{o(k+\log_2\cost)}$-time algorithm, where $N$ is the number of input points, that solves \MSR\ for metrics induced by weighted bipartite graphs, for any computable function $f$.
\end{restatable}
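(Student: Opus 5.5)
The plan is a parameterized reduction from \mcc, parameterized by the clique size $k$. We use the standard fact that \mcc\ is \W[1]-hard and, under \ETH, has no $f(k)\cdot n^{o(k)}$-time algorithm. Given $G$ with color classes $V_1,\dots,V_k$, each $V_i=\{v^i_1,\dots,v^i_n\}$, we build a weighted bipartite graph $H$. We set $k'=k+\binom{k}{2}$ and choose $\cost$ to be a constant multiple of $k'$, so $\cost=\bigoh(k^2)$. Since $\cost\in\RR$, we may use fractional edge weights of the form $c\pm p/(4n)$ to encode indices without letting $\cost$ grow with $n$. Then $k'+\log_2\cost=\bigoh(k^2)$ at worst. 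To obtain the stated bound $N^{o(k+\log\cost)}$, I would instead start from a variant whose lower bound is in terms of the number of \emph{cluster groups}, e.g.\ \textsc{Multicolored Clique} with $\binom{k}{2}$ edge-selection parts (whose \ETH\ bound is $n^{o(k)}$ with $\ell=\Theta(k^2)$ groups only giving $n^{o(\sqrt{\ell})}$). Alternatively, I would attach the edge verification to the vertex clusters so that only $k'=\bigoh(k)$ clusters are used. I would commit to the latter: one cluster per color class, and $\cost=\bigoh(k)$.

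\textbf{Construction.} For each $i$ we create the vertex nodes $v^i_p$ together with a \emph{guard} $g_i$ adjacent to all $v^i_p$ with weight $1$. For each pair $i\neq j$ and each edge $v^i_pv^j_q\in E(G)$ we create an \emph{edge-check node} $w^{ij}_{pq}$. We place check nodes and guards on one side and vertex nodes on the other, which makes $H$ bipartite. The node $w^{ij}_{pq}$ is joined to $v^i_p$ with weight $1$, and all non-edges are realized only through paths of length at least $3$. We set $\cost=k$, so that each of the $k$ clusters must have radius essentially $1$.

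\textbf{Correctness.} The intended solution puts a center at the clique vertex $v^i_{p_i}$ with radius $1$. This covers $g_i$ and all vertex nodes of $V_i$, since they are at distance $2$ via $g_i$. To make that distance at most $1$, I will shrink the guard weights to $1/2$. It also has to cover the check nodes. The difficulty is that all check nodes for the pair $\{i,j\}$ must be covered, not only those of clique edges. So instead of one node per edge, I will use per-pair complementary ``threshold'' checks $x^{ij}_+,x^{ij}_-$. Here $v^i_p$ is joined to $x^{ij}_{\pm}$ with weights $1\mp p/(4n)$, $v^j_q$ is joined with weights $1\pm q/(4n)$, and edge nodes route the comparison. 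For the backward direction, the guards force exactly one center per class. Each guard has private long edges, so a center outside $V_i\cup\{g_i\}$ would be too expensive. The tight budget $\cost$ then forces all radii to sum exactly, and the complementary weights force $p$ and $q$ to be consistent with an edge of $G$.

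\textbf{Main obstacle.} The main obstacle is precisely this edge verification under a sum objective. A cluster covers \emph{every} node within its radius, and slack in one radius can be traded against another. The fix I would try first is to make the total slack exactly zero via complementary weights, so that any deviation in one cluster strictly increases the sum. I would check carefully that the shortest paths through guards do not create unintended short distances; this is where bipartiteness and the choice of weights (all at least $1/2$) matter. Finally, the instance has $N=\bigoh(k^2n^2)$ points and $k'+\log\cost=\bigoh(k)$. Hence an $f(k'+\cost)N^{o(k'+\log\cost)}$-time algorithm would solve \mcc\ in time $f'(k)\,n^{o(k)}$, contradicting \ETH.
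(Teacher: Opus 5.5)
Your overall plan matches the paper's: a reduction from \textsc{Multicolored Clique} with one cluster per color class, so that $k+\log_2\Delta=\Theta(k)$. But there is a genuine gap. The edge-verification gadget, which carries the whole reduction, is never actually constructed, and the pieces you do specify do not work.

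\textbf{The threshold nodes impose no constraint.} In your intended solution a center $v^i_p$ needs radius at least $1$ to reach the rest of $V_i$ through the guard. The node $x^{ij}_{+}$ is within distance $1-p/(4n)<1$ of $v^i_p$. The node $x^{ij}_{-}$ is within distance $1-q/(4n)<1$ of $v^j_q$. So both are covered for every choice of $p,q$.

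More fundamentally, a bounded number of check nodes per pair $\{i,j\}$, each covered iff a threshold condition on $p$ or on $q$ holds, cannot express adjacency in an arbitrary $G$. You need one check node per non-edge (or per edge plus edge-selection clusters, which brings back $\binom{k}{2}$ clusters). The sentence ``edge nodes route the comparison'' is exactly where this is missing.

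\textbf{The missing idea is to encode non-edges.} Covering is an OR over clusters, and non-edge nodes exploit this. For each non-edge $e=\{u,v\}$ with $u\in V_i$, $v\in V_j$, the paper adds a node $w_e$ adjacent to every vertex of $(V_i\cup V_j)\setminus\{u,v\}$. These edges have weight $2^i$ on the $V_i$-side and $2^j$ on the $V_j$-side. A center in $V_i$ with radius $2^i$ then covers $w_e$ unless it equals $u$, and symmetrically for $V_j$. Hence all $w_e$ are covered iff the chosen centers are pairwise adjacent in $G$.

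\textbf{Your backward direction also lacks a pinning mechanism.} With uniform scales, slack trading is possible. Nothing stops a center from sitting on the guard: $g_i$ with radius $1/2$ covers $\{g_i\}\cup V_i$. For $k=2$, centers $g_1$ (radius $1/2$) and any $v^2_q$ (radius $3/2$) cover every node you have specified, at cost $2=\Delta$, independently of $G$.

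The paper pins the structure with two ingredients.
\begin{itemize}
\item A set $W^i$ of $k+1$ vertices, each adjacent to all of $V_i$ with weight $2^i$, alongside an apex $x_i$ joined to $V_i$ with weight $2^{i-1}$. By pigeonhole some $w\in W^i$ is not a center and must be covered through its neighbours in $V_i$. This also penalizes centering at the apex, which would need radius $2^{i-1}+2^i$.
\item Exponentially separated scales with budget $\Delta=2^{k+1}-2=\sum_i 2^i$. If $\ell$ is the largest index with no center in $V_\ell$ of radius at least $2^\ell$, the cost is already at least $2^{k+1}$.
\end{itemize}
Together these force exactly one center per $V_i$ with radius exactly $2^i$, after which the non-edge nodes do the verification.

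This also explains why the ETH bound is phrased with $\log_2\Delta$. Here $\Delta=2^{\Theta(k)}$, but $k+\log_2\Delta=\Theta(k)$ and $N$ is polynomial, so the bound follows directly from the $n^{o(k)}$ bound for \textsc{Multicolored Clique}.

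Your aim of $\Delta=O(k)$ via fractional weights is a red herring. With real weights any instance can be rescaled to $\Delta=1$, so $\Delta$ is a meaningful parameter only for integral weights, as in the paper's construction.
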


We note that when the inter-point distances are bounded by a polynomial in the input-size, the algorithm of Gibson et al.~\cite{swat/GibsonKKPV08} admits randomized \textit{quasi-polynomial} time. At this point, we mention that our \W[1]-hardness proof also implies \NP-hardness of the decision version of the problem and justifies the exponential edge-weights in the reduction of \autoref{thm:MT}.

We then explore the parameterized complexity of \MSR\ with respect to structural parameters. Drexler et al.~\cite{DBLP:journals/corr/abs-2310-02130} showed that the \MSR\ problem admits an \XP\ algorithm on metrics induced by bounded treewidth graphs, and asked the question whether the problem admits an \FPT\ algorithm   when parameterized by the treewidth $tw$. We first answer their question negatively, and more strongly show that \MSR\ problem stays \textup{\W[1]\text{-hard}} on metrics induced by  weighted graphs when parameterized by $vc$, the vertex cover number, plus $k$ (even on bipartite graphs). 

\begin{restatable}{theorem}{thmvwk}\label{thm:vc+k}
The \MSR\ problem is \textup{\W[1]\text{-hard}} on metrics induced by
weighted bipartite graphs when parameterized by the vertex
cover number ($\vc$) plus the number of clusters ($k$). Moreover, there is no $f(\vc+k)\cdot N^{o(\vc+k)}$-time algorithm for any computable function $f$ unless \ETH\ fails, where $N$ is the number of input points.
\end{restatable}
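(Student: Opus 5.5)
We reduce from \mcc: given a graph $H$ whose vertex set is partitioned into $\kappa$ color classes $V_1,\dots,V_\kappa$, decide whether $H$ has a clique with exactly one vertex per class. This problem is \W[1]-hard parameterized by $\kappa$, and under \ETH\ it has no $f(\kappa)n^{o(\kappa)}$-time algorithm. We aim for a weighted bipartite graph $G$ with vertex cover number $\bigoh(\kappa^2)$ and $k=\bigoh(\kappa^2)$ clusters. Then both the \W[1]-hardness and the \ETH\ bound follow, since $\vc+k=\bigoh(\kappa^2)$.

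\textbf{Construction.} The hardness has to sit in the \emph{choice of radii}, not the choice of centers. Centers in the small side of the bipartition are limited, so they should carry the selection through their radius. For each color $i$ we add a center vertex $a_i$. For each pair $i<j$ we add a center vertex $b_{ij}$. These form the vertex cover $C$, so $|C|=\kappa+\binom{\kappa}{2}$. We number the vertices of $V_i$ as $1,\dots,n$ and choose weights so that $a_i$ ``selects'' $v\in V_i$ by taking radius $r_i=M+v$.

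The independent side contains gadget leaves; each leaf is adjacent only to vertices of $C$.
\begin{itemize}
\item Forcing leaves make every vertex of $C$ a center. Each $a_i$ gets several private pendant leaves at huge distance $L$ that must be covered by $a_i$ itself. We set the budget so that exactly $k=|C|$ centers are affordable, each with a baseline radius.
\item Complementary leaves encode the choice at $a_i$. For each $v\in V_i$ we add leaves at distance $M+v$ from $a_i$, and leaves whose coverage by the other endpoint requires radius $M'-v$. This gives a ``low/high'' pair whose radii sum to a constant exactly when both endpoints agree.
\item Edge leaves connect the selections. For each edge $uv\in E(H)$ with $u\in V_i$, $v\in V_j$, we add a leaf $\ell_{uv}$ adjacent to $a_i$, $a_j$ and $b_{ij}$ with weights tuned as follows. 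The leaves for the pair $(i,j)$ are all covered for budget $B$ by $b_{ij}$ only if $b_{ij}$'s radius encodes some edge $uv$. This must be consistent with $a_i$ choosing $u$ and $a_j$ choosing $v$.
\end{itemize}
The standard trick for the last point is to use two separate numeric encodings per pair, a ``$u$-coordinate'' and a ``$v$-coordinate'', for example $r=M(\text{index of }u)+(\text{index of }v)$, with base $M>n$. Splitting $b_{ij}$ into two centers is also fine if needed.

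\textbf{Correctness.} Given a multicolored clique, we set radii according to the chosen vertices and edges and check that every leaf is covered. Because $G$ is bipartite with all leaves adjacent only to $C$, every distance is either a single edge weight or a path of length two or three through $C$. Large additive offsets make every detour longer than the intended direct edge, so distances are controlled.

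For the converse, we first argue that every center lies in $C$, or can be moved there. Forcing leaves make every $a_i$ and $b_{ij}$ a center. Then $k=|C|$ leaves no spare center, and a leaf center is dominated by its neighbor in $C$. The tight budget then forces every radius to lie exactly on an encoding value. The complementary-pair argument converts inequalities into equalities, so the radii define the chosen vertices. Coverage of the edge leaves then forces those chosen vertices to be pairwise adjacent.

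\textbf{Main obstacle.} The hard step is this tightness argument: ruling out a solution that ``cheats'' by enlarging some radii and shrinking others. An enlarged radius can cover edge leaves of several pairs at once. To block this, I would first try scaling the levels by powers of a large base. Each coordinate then lives in its own digit, so any deviation in one digit costs more than the total slack available in the others. The tension is between that digit separation and keeping triangle-inequality shortcuts through $C$ harmless.
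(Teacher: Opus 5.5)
\textbf{There is a genuine gap, and the first part is the parameter accounting.} Your centers (and your vertex cover) are $a_1,\dots,a_\kappa$ plus one $b_{ij}$ per pair, so $\vc+k=\Theta(\kappa^2)$. If the gadgets worked, that would still be enough for \W[1]-hardness. It does \emph{not} give the \ETH\ part of the statement. Composing an $f(\vc+k)\cdot N^{o(\vc+k)}$ algorithm with your reduction only yields an $n^{o(\kappa^2)}$ algorithm for \mcc, and \ETH\ does not forbid that. You would only exclude $N^{o(\sqrt{\vc+k})}$. The claimed bound needs $\vc+k=\bigoh(\kappa)$, so you cannot afford pair centers at all. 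The paper uses exactly $2\kappa$ centers, and these $2\kappa$ vertices are the vertex cover.

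\textbf{The missing idea is how to test the clique with $\bigoh(1)$ centers per colour class, given that balls are monotone in the radius.} Whether a fixed center covers a leaf is an upward-closed condition on its radius. So no set of leaves can be covered by $b_{ij}$ ``only if its radius encodes some edge'', because membership in $E(H)$ is not a threshold condition. Likewise, a single $a_i$ with radius $M+h_i$ can test $h_i\ge t$ but never $h_i\neq\ell$. Your ``complementary leaves'' need a second center to carry the complementary radius, and with one center per class there is none.

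The paper does the following:
\begin{enumerate}
\item Each class gets two anchors $x^i_{+},x^i_{-}$, joined to $v^i_h$ by edges of weight $h+\omega_{+}(i)$ and $n-h+1+\omega_{-}(i)$. Covering $V_i$ within budget then pins the radii to $h_i+\omega_{+}(i)$ and $n-h_i+\omega_{-}(i)$, a prefix/suffix split of $V_i$.
\item It encodes \emph{non-edges}. For a non-edge $e=\{v^i_\ell,u\}$ with $u\in V_j$, a leaf $w_e$ is joined to $x^i_{+},x^i_{-}$ with weights $\ell+1+\omega_{+}(i)$ and $n-\ell+1+\omega_{-}(i)$, and symmetrically to $x^j_{\pm}$.
\item The two anchors of class $i$ cover $w_e$ if and only if $h_i\ge\ell+1$ or $h_i\le\ell-1$, i.e.\ $h_i\neq\ell$. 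So $w_e$ is uncovered exactly when both endpoints of the non-edge are selected, and no pair centers are needed.
\end{enumerate}

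The tightness issue you flag is settled by three ingredients: $\omega_{+}(i)=2^{2i}in$ and $\omega_{-}(i)=2^{2i+1}in$; $2\kappa+1$ pendant leaves per anchor, giving a pigeonhole argument against $2\kappa$ centers; and an argument from the largest violating index downwards, showing that any deviation at level $\ell$ costs more than all lower levels combined.

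Separately, your claim that a leaf center is ``dominated by its neighbour in $C$'' is false as stated. Moving a center from $y$ to $c$ preserves coverage only after enlarging the radius by $d(y,c)$, so this step also needs a budget argument.
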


 We then, using the \XP\ algorithm of Drexler et al.\cite{DBLP:journals/corr/abs-2310-02130}, show that the \MSR\ problem is \FPT\ when parameterized by the combined parameter $tw$, the treewidth and $\cost$, the cost of clustering (\autoref{cor:tw+Delta}).

Taken together, these results already provide a rather comprehensive picture
of the parameterized complexity of \MSR\ on metrics induced by
 weighted graphs for combinations of the parameters $k$,
$\cost$ and any sparse structural parameter above and including the
vertex cover number (please refer to \autoref{tab:results-summary} for
an overview of our results). We complete the picture for dense
structural parameters (such as neighborhood diversity and modular
width) by showing that \MSR\ is already \NP-hard and \W[1]-hard
parameterized by $k+\cost$ on complete as well as complete bipartite graphs.

On the positive side, we obtain fixed-parameter tractability when
parameterizing by $\tw$ plus $\Delta$ (\autoref{cor:tw+Delta}).

\textbf{(Unweighted) Graph Metrics}
We now consider the case of metrics induced by (unweighted) graphs. Note
that as the  randomized  algorithm of Gibson et
al.~\cite{swat/GibsonKKPV08}, solves \MSR\ on undirected 
graphs in quasi polynomial time (as the maximum interpoint distance is
at most $n-1$), \MSR\ on these metrics should not be \NP-hard, unless
\NP\ is contained in randomized quasipolynomial time--an unlikely
complexity theoretic consequence. However, it is a long standing open question whether \MSR\ on undirected graph metrics, is solvable in polynomial-time; this is only known for the version of \MSR\ without zero clusters (i.e., no singleton clusters are allowed)~\cite{BehsazS12}.

Although we cannot answer
this question, we provide the following main contributions for these metrics. First, we
exhibit two natural variants of \MSR, namely, \exact\ and \fixed, and provide strong hardness
results for these variants even on metrics induced by undirected
graphs. In the
\exact\ version, it is required to find a solution with exactly
$k$ non-zero clusters. We formally define it as follows.

\pbDefP{\exact}
{An undirected graph $G=(V,E)$, an integer $k$, and a number $\cost\in
  \RR$. The input metric $(X,d)$ is the graph metric induced by $G$.}
{$k$ (the number of clusters)}
{Is there a set $C=\{(c_1,r_1),\dots,(c_k,r_k)\}$ of exactly $k$ center-radius pairs $(c_i,r_i)$ such that:
\begin{itemize}
\item $c_i \in X$ and there are at least two points in $X$ with
  distance at most $r_i$ from $c_i$,
\item every point in $X$ is within the distance $r_i$ from $c_i$ for at
  least one $i\in [k]$,
\item $\sum_{i \in [k]}r_i\leq \cost$.
\end{itemize}}

We show that \exact\ in
\NP-hard and \W[1]-hard parameterized by $k+\cost$ and provide a lower bound under \ETH\ (\autoref{thm:EXACT-MSR}). The exact-$k$ constraint and prohibition of singleton clusters in \exact\, are not artificial but reflect real-world needs in network partitioning, resource allocation, and community detection. In such settings, singleton clusters are undesirable for robustness and minimum-load considerations, and a fixed number of clusters is required for budget or regulatory constraints.

In the \fixed\
version, one is additionally required to choose the cluster centers
from a given set of allowed centers.

\pbDefP{\fixed}
{An undirected graph $G=(V,E)$, an integer $k$, a set of allowed
  centers $A\subseteq V$, and a number $\cost\in RR$. The input metric $(X,d)$ is the graph metric induced by $G$.}
{$k$}
{Is there a set $C=\{(c_1,r_1),\dots,(c_t,r_t)\}$ of $t\leq k$
  center-radius pairs $(c_i,r_i)$ such that:
\begin{itemize}
\item $c_i \in A$ and $r_i\geq 0$ is a number,
\item every point in $X$ is within distance $r_i$ from $c_i$ for at
  least one $i\in [t]$,
\item $\sum_{i \in [t]}r_i\leq \cost$.
\end{itemize}}

The restriction to allowed centers in \fixed\ models practical settings, where the centers can only be selected from a predefined set of feasible locations.
We show
that \fixed\ is \W[1]-hard parameterized by either $k+\cost$ (\autoref{thm:FixedMSR}) or $k+|A|+\fvs$ (\autoref{thm:FixedMSRfvs}),
where $A$ is the set of allowed centers and $\fvs$ is the \emph{feedback vertex set 
number} of the underlying undirected graph. It is important to note that we are not able to show \NP-hardness for \fixed\ and in fact we see this as a way around first having to settle the classical complexity of \MSR. That is, to obtain a parameterized complexity classification, it is sufficient to show \W[1]-hardness, which as our result for \fixed\ shows is an easier target.
 
Along with these results, we also
show that \MSR\ as well as \exact\ and \fixed\ are fixed-parameter
tractable parameterized by the \emph{treedepth} of the underlying graph (\autoref{thm: treedepth}).

\subsection{Further Related Work} In this subsection, we mention a few more related research works on \MSR\ problem. Bil{\`o} et al.~\cite{BiloCKK05} presented a polynomial time algorithm for \MSR\ when the input points are on a line. Proietti and Widmayer~\cite{isaac/ProiettiW06}, considered the problem of partitioning the nodes of a graph $G$ into exactly $k$ non-empty subsets, in order to minimize the sum of the induced subgraph radii. They show that the problem is \NP-hard when $k$ is part of the input. They presented an exact algorithm for the problem for a fixed constant $k > 2$ running in $O(n^{2k}/k!)$ time.

Clustering problems have been studied extensively from the perspective of exact algorithms. An initial significant result in this domain is an exact algorithm for the 
$k$-\textsc{Center} problem by Agarwal and Procopiuc~\cite{AgarwalP02}. They presented an $n^{O(k^{1 - 1/d})}$ % $n^{O(k^{ 1- \frac{1}{d} })}$ 
time algorithm in $\mathbb{R}^d$.
  Notably, in two-dimensional space, their algorithm runs in  $2^{O(
\sqrt{n \log n})}$
time for any value of $k$, i.e., in sub-exponential time. Fomin et al.~\cite{iwpec/FominG0P022} explored exact exponential algorithms for various clustering problems, including 
$k$-\textsc{Median} and 
$k$-\textsc{Means}. They developed a $O^*(1.89^n)$ time algorithm beating the trivial $O(2^n)\cdot \poly(n)$ time for these problems. For the facility location version of  
$k$-\textsc{Median} and $k$-\textsc{Means} problems, they presented a 
$2^n\cdot\poly(n)$ algorithm using a novel technique of \textit{subset convolution} studied by Bj{\"o}rklund  et al.~\cite{BjorklundHK09}. A detailed survey on the techniques for designing exact exponential algorithms can be found in the text by Fomin and Kratsch~\cite{FominK10}.

Inamdar and Varadarajan~  \cite{esa/0002V20} derived a 28-approximation for the uniformly capacitated \MSR\ problem with a running time of 
$O(2^{O(k^2)}\cdot n^{O(1)})$. Bandyapadhyay et al.~\cite{BandyapadhyayL023a} also provided \FPT-approximations: they developed a $(4 + \epsilon)$-approximation algorithm running in time of $2^{O(k \log(k/\epsilon))}\cdot n^3$ for \MSR\ with uniform capacities, and a $(15 + \epsilon)$-approximation algorithm for \MSR\ with non-uniform capacities, which runs in $2^{O(k^2\log k)}\cdot n^3$ time.  Interestingly, they also proved the \W[2]-hardness of a capacitated version of \MSR.
    Recently, Jaiswal et al.~\cite{Jaiswal0Y24} have improved upon these results. They provided a  $(3 + \epsilon)$-approximation algorithm with a running time of $2^{O(k \log(k/\epsilon))}\cdot n^3$ for \MSR\ with uniform capacities, a $(4 + \sqrt{13})$-approximation algorithm for \MSR\ with non-uniform capacities, which runs in $2^{(k^3+k\log(k/\epsilon))}\cdot \poly(n)$ time. Further, their results extend to a constant
factor approximation ratio for any $L_p$ norm, where $p \geq 1$. Filtser and Gadekar~\cite{DBLP:journals/corr/abs-2409-04984} obtained a $(3 + 2\sqrt{2} + 
\varepsilon)$-approximation for \MSR\ with non-uniform capacities, 
running in time $2^{O(k^2 \log k + k \log(k/\varepsilon))} \cdot 
n^{O(1)}$, which also extends to the case where the objective is a 
monotone symmetric norm of the radii, as well as to cluster capacities.
More recently, Chen et al.~\cite{aaai/ChenXXZ24} presented an \FPT-$(2+\epsilon)$-approximation algorithm for the \MSR\ problem which runs in time $2^{O(k\log (k/\epsilon))}n^3$.
They also considered variants of \MSR\ with capacity and matroid constraints. For both constraints, they obtained a $(3+\epsilon)$-approximation algorithm that runs in time $2^{O(k\log(k/\epsilon))}n^3$. 
Recently, Carta et al.~\cite{CartaDHR024} derived constant-factor-approximations for \MSR\ with a fairness constraint. Banerjee et al.~\cite{DBLP:conf/aaai/BanerjeeBGH25} presented an exact algorithm for Min-Sum-Diameters with running time $n^{O(k)}$, along with $(1+\epsilon)$-approximation algorithms for both Min-Sum-Radii and Min-Sum-Diameters in metric spaces of bounded doubling dimension~$d$, achieving runtime $O(kn) + (1/\epsilon)^{O(dk)}$, with extensions to fairness constraints and mergeable clustering. Their work also includes a reduction from the \textsc{Grid-Tiling} problem (Theorem 33 in the \mbox{arXiv}
 version of ~\cite{aaai/ChenXXZ24}); although not explicitly stated, the reduction is a parameterized reduction. Since \textsc{Grid-Tiling} is known to be \W[1]-hard~\cite{marx2007optimality, CyganFKLMPPS15}, this establishes the \W[1]-hardness of \MSR\ parameterized by $k$ and $\cost$ the cost of the clustering in constant doubling dimension metrics.

Drexler et al.~\cite{DBLP:journals/corr/abs-2310-02130} studied a facility variant of \MSR\ problem called \textsc{Msrdc} ( Minimum Sum of Radius Dependent Costs ) where we are given $k\in \mathbb{N}$, a metric space $(V,d)$ where $V=F\cup C$ for facilities $F$ and clients $C$. The objective is to find a clustering given by $k$ facility-radius pairs $(f_i,r_i)$ with a goal to minimize the sum of cluster costs $\sum_{i=1}^k r_i^{\alpha}$ for some $\alpha\in \mathbb{R}_{\geq 1}$. They devised an $\XP$ algorithm for the problem on metrics induced by bounded treewidth graphs. Note that \MSR\ is a special case of \textsc{Msrdc} with $\alpha$ set to 1.
They also showed that even if we know the set $C\subset X$ of the centers, \MSR\ remains \NP-hard even on metrics induced by weighted planar graphs and bounded doubling dimension.

\section{Preliminaries} 
We use the standard graph-theoretic notation as in Diestel \cite{Diestel12}.
We consider graphs that are undirected, connected, finite and simple. We denote by $G=(V,E)$ a graph with vertex set $V(G)$ and edge set $E(G)$. We denote the edge between $u$ and $v$ by $\{u,v\}$. We use $n$ to denote the number of vertices of $G$. For a set $X \subseteq V$, the graph $G[X]$ denotes the induced subgraph of $G$ on the vertex set $X$. The open neighborhood of a vertex $v$, denoted by $N(v)$, is the set of vertices adjacent to $v$ and the closed neighborhood of $v$ is denoted by $N[v]=N(v) \cup \{v\}$. We write $[n] = \{1,2,\dots,n\}$ for a positive integer $n$. A graph $G$ is a \textit{clique} if for every pair of vertices $u$ and $v$ there is an edge $\{u,v\}$ in $E(G)$.
A \emph{path} $P$ on the $\ell\leq |V(G)|$ vertices in $G$ is a sequence of vertices
$P = (v_1, v_2, \ldots, v_\ell)$ such that $\{v_{i-1}, v_i\} \in E$ for every
$i \in \{2,\ldots,\ell\}$. If $G$ is an edge-weighted graph with weight function $w : E \to \mathbb{R}_{\ge 0}$,
the \emph{weight} of $P$ on $\ell$ vertices, is defined as
$w(P) = \sum_{i=2}^{\ell} w(\{v_{i-1}, v_i\})$.
If $G$ is unweighted, the \emph{weight} (or \emph{length}) of $P$ on $\ell$ vertices is simply the
number of edges in $P$, that is, $w(P) = \ell-1$.
For two vertices $u,v \in V(G)$, the \emph{distance} between $u$ and $v$, denoted by $d(u,v)$, is defined as
$d(u,v) = \min\{ w(P) \mid P \text{ is a } u\text{--}v \text{ path in } G \}$.
If no $u$–$v$ path exists, we define $d(u,v) = \infty$.

  \textbf{Parameterized Complexity}:
 In Parameterized Complexity theory, the computational complexity of a problem is measured as a function of the input size $|I|$ and a non-negative integer parameter $k$ associated with the input.
We say a parameterized problem is \emph{fixed-parameter tractable} (\FPT) with respect to a parameter $k$, if 
there exists an algorithm that runs in time 
$f(k)\cdot |I|^{O(1)}$ where $f$ is a computable function independent of the input size $|I|$ and $k$ is a parameter associated with the input instance.

Let $P, Q \subseteq \Sigma^* \times \mathbb{N}$ be two parameterized problems.  A \emph{parameterized reduction} from $P$ to $Q$ is an algorithm $\mathcal{A}$ that, given an instance $(I, k)$ of $P$, outputs an instance $(I', k')$ of $Q$ such that $(I, k)$ is a \textsf{YES}  instance of $P$ if and only if $(I', k')$ is a \textsf{YES}  instance of $Q$,
$k' \leq g(k)$ for some computable function $g \colon \mathbb{N} \to \mathbb{N}$, and
the running time of $\mathcal{A}$ is $f(k)\cdot |I|^{O(1)}$ for some computable function $f$.
The hardness theory in parameterized complexity is developed via the notion of parameterized reductions and a hierarchy of complexity classes forming the \emph{W}-hierarchy: $\FPT \subseteq \W[1] \subseteq \W[2] \subseteq \cdots \subseteq \XP$.

In a $q$-CNF-SAT instance, we are given a $q$-CNF formula $\Phi$, and the
question is to decide whether $\Phi$ is satisfiable. \cite{ImpagliazzoP01}, in their seminal work, formulated the
following hypothesis, called Exponential Time Hypothesis(\ETH) which states that $q$-CNF-SAT, $q\geq  3$ cannot be solved within a
running time of $2^{o(p)}$ or $2^{o(m)}$, where $p$ is the number of variables and $m$ is the number of clauses in the input $q$-CNF formula. The Exponential Time Hypothesis ($\ETH$) is known to imply $\FPT\neq \W[1]$. We rely on the definitions of the structural graph parameters such as vertex cover number, treedepth, feedback vertex set, treewidth, cliquewidth, neighborhood diversity, modular width, and modular treewidth etc., for more details on the subject, the reader is deferred to \cite{CyganFKLMPPS15}.

\section{\MSR\ on metrics induced by weighted graphs}

In this section, we present our results for \MSR\ on metrics induced
by weighted graphs. In particular, we show in \Cref{ssec:wkcost} that
\MSR\ on metrics induced by weighted bipartite graphs is \W[1]-hard
parameterized by $k+\cost$. We then continue in \Cref{ssec:wkvc} to
show that the same result applies when parameterized by $k+\vc$, where
$\vc$ is the vertex cover number of the graph. Interestingly,
the latter result even applies for the variant of \MSR, where the set of centers is given and
one merely needs to find the right radius for every center, i.e., the
hardness of the problem lies in finding the right radius for every
center. We then complete the picture in \Cref{ssec:wcomplete} by
presenting further hardness results for dense graph parameters and an
algorithm using the parameter treewidth. Together, those results (which
are also given in \Cref{tab:results-summary} apart from the results for dense graph
parameters) provide a comprehensive picture of the parameterized
complexity of \MSR\ with respect to any combination of the parameters
$k$, $\cost$, structural parameters less restrictive than vertex
cover, and common structural parameters for dense graphs (such as
neighborhood diversity and modular width).

\subsection{Parameterized by $k+\Delta$}\label{ssec:wkcost}

Here, we show that \MSR\ on metrics induced by weighted bipartite
graphs is \W[1]-hard parameterized by $k+\cost$. We prove the theorem by giving a parameterized reduction from the well-known $\W[1]$-hard problem \mcc.
The reduction carefully encodes the selection of a multicolored clique through weighted clustering constraints. 

\thmMT*

\begin{proof}
 Let $(G, k, (V_1, \ldots , V_k))$ denote an instance of the \mcc\
 problem. We create an instance $(X,d,k,\cost)$ of the \MSR\ problem,
 where $X$ is the set of vertices of the weighted bipartite graph $G'$
 and $d$ is the shortest-path metric induced by $G'$, using the following steps (see also \autoref{fig:example:construction:thm1} for
 an illustration of the construction): 
\begin{itemize}
\item $G'$ has one vertex $v$ for every vertex $v \in V(G)$.

\item For each $i\in [k]$, add a vertex $x_i$ and make $x_i$ adjacent
  to every vertex in $V_i$, with each edge having a weight of
  $2^{i-1}$.
\item For each $1 \leq i\leq k$, introduce a set $W^i$ consisting of $k+1$ vertices, and make each $w\in W^i$ adjacent to every vertex in $V_i$, with each edge having a weight of $2^i$.

\item For every non-edge $e \notin E(G)$ with endpoints $u \in V_i$ and $v \in V_j$, introduce a vertex $w_e$ in $G'$ and connect it to every vertex in $(V_i \cup V_j) \setminus \{u, v\}$. Assign a weight of $2^i$ to every edge $\{w_e,a\}$ for $a \in V_i$ where $a\neq u$, and a weight of $2^j$ to every edge $\{w_e,b\}$ for $b \in V_j$ where $b\neq v$. Let $\overline{E}_{i,j}$ denote the set of these vertices. 

\item Let  
$X=V'(G')$ be the point set in the metric space, and let $d$ denote the shortest path metric in $G'$. The terms vertex
and point are used interchangeably. 
\item Finally, set the cost $\cost$ to $2^{k+1}-2$
\end{itemize}
This completes the construction of the instance $(X,d,k,\cost)$ of the \MSR\ problem. We observe that since the graphs $G'[\cup_{i\in [k]} V_i]$ and $G'- (\cup_{i\in [k]} V_i)$ are independent sets, the obtained graph $G'$ is bipartite. Moreover, the construction takes $\poly(|V(G)|,k)$-time i.e., polynomial time.

    \begin{intuitionbox}
The high-level idea of the reduction is the following. The construction relies on exponentially increasing edge weights and enforces a correspondence between selected centers and the vertices of a multicolored clique in the input graph $G$ of \mcc\ instance.
The natural candidates for center locations are the partition classes $V_i$. To force the selection of one center from each $V_i$ for every $i \in [k]$, we introduce a vertex-set $W^i$ of size $k+1$. Since at most $k$ centers can be selected, at least one vertex $w \in W^i$ is not chosen as a center. Covering such a vertex within the budget $\cost$ is possible only if a center lies in $V_i$, thereby forcing the choice inside each part.
Each $V_i$ is kept independent to preserve bipartiteness. We add an apex vertex $x_i$ adjacent to all vertices of $V_i$ so that a chosen center $v_i \in V_i$ covers its entire part with radius exactly $2^i$. This radius is tight: any smaller radius fails to cover even $V_i$, while any deviation from this scale incurs a cost exceeding $\cost$ due to the exponential separation of weights.
Finally, if the selected centers do not correspond to a multicolored clique in $G$, the exponential weight scheme forces a higher-layer penalty, again violating the budget. Thus, a feasible clustering corresponds precisely to a multicolored clique in $G$.
\end{intuitionbox}

\begin{figure} [t] 
\centering
\begin{tikzpicture}[mynode/.style={circle,fill=blue, inner sep=0pt, minimum width=0.2cm}, 
mb/.style={mynode,draw=blue,fill=blue},
mr/.style={mynode,draw=red,fill=red},
mg/.style={mynode,draw=green!50!black,fill=green!50!black},
mbl/.style={mynode,draw=black,fill=black},
sn/.style={minimum width=0.1cm}]

  %---------------- Left: G ----------------%
  \node (b1) [mb,label=left:{$b_1$}] at (0,1) {};
  \node (b2) [mb,label=left:{$b_2$}] at (0,0) {};
  \node (r1) [mr,label=below right:{$r_1$}] at (2,1) {};
  \node (r2) [mr,label=left:{$r_2$}] at (2,0) {};
  \node (r3) [mr,label=right:{$r_3$}] at (2,-1) {};
  \node (g1) [mg,label=above:{$g_1$}] at (4,1) {};

  \draw [thick,black] (b1) edge (r1);
  \draw [thick,black] (r1) edge (g1);
  \draw [thick,black] (r1) edge (b2);
  \draw [thick,black] (b2) edge (r3);
  \draw [thick,black, bend left=25] (b1) edge (g1);
  \draw [thick,black] (r3) edge (r2);
  \draw [thick,black] (r2) edge (r1);
  \draw [thick,black] (r2) edge (g1);

  %---------------- Right: G' ----------------%
  \begin{scope}[xshift=8cm]
  \begin{scope}[xshift=-1.5cm]

  % Ellipsoids
  \draw[fill=blue!30]  (0, 0.5) ellipse [x radius=0.2, y radius=1];
  \draw[fill=red!30]  (2.5,0) ellipse [x radius=0.2, y radius=1.25];
  \draw[fill=green!50!black!30!white] (5, 1) ellipse [x radius=0.2, y radius=0.2];

  % Main vertices
  \node (b1k) [mb,label=left:{$b_1$},label=right:{\textcolor{blue}{$2^1$}}] at (0,1) {}; 
  \node (b2k) [mb,label=left:{$b_2$}] at (0,0) {}; 
  %\draw (b1k) -- (b2k);

  \node (r1k) [mr,label=above:{$r_1$},label=right:{\textcolor{red}{$2^2$}}] at (2.5,1) {}; 
  \node (r2k) [mr,label=left:{$r_2$}] at (2.5,0) {};
  \node (r3k) [mr,label=left:{$r_3$}] at (2.5,-1) {}; 
  %\draw  (r1k) to (r2k) to (r3k); 
  %\draw [bend left=30] (r1k) edge (r3k);

  \node (g1k) [mg,label=above:{$g_1$},label=right:{\textcolor{green!50!black}{$2^3$}}] at (5,1) {};

  %---------------- x_i vertices (LEFT of V_i blocks) ----------------%
  % x1 for V1 (blue)
  \node (x1) [mb,sn,label=left:{$x_1$},label=above:{\scriptsize \textcolor{blue}{$2^0$}}] at (-1.2,0.5) {};
  \draw [black,opacity=0.5] (x1) -- (b1k);
  \draw [black,opacity=0.5] (x1) -- (b2k);

  % x2 for V2 (red)
  \node (x2) [mr,sn,label=right:{$x_2$},label=above:{\scriptsize \textcolor{red}{$2^1$}}] at (3.8,0) {};
  \draw [black,opacity=0.5] (x2) -- (r1k);
  \draw [black,opacity=0.5] (x2) -- (r2k);
  \draw [black,opacity=0.5] (x2) -- (r3k);

  % x3 for V3 (green)
  \node (x3) [mg,sn,label=right:{$x_3$},label=above:{\scriptsize \textcolor{green!50!black}{$2^2$}}] at (6.3,0) {};
  \draw [black,opacity=0.5] (x3) -- (g1k);

  %---------------- W^1 leaves ----------------%
  \node (w11) [mb,sn,label=below:{$w_1^1$}] at (-0.75,-2) {};
  \node (w21) [mb,sn,label=below:{$w_2^1$}] at (-0.25,-2) {};
  \node (w31) [mb,sn,label=below:{$w_3^1$}] at (0.25,-2) {};
  \node (w41) [mb,sn,label=below:{$w_4^1$}] at (0.75,-2) {};
  \draw [black,opacity=0.25] (w11) -- (b2k)--(w21);  
  \draw [black,opacity=0.5] (w31) -- (b2k)--(w41);
  \draw [black,opacity=0.25, bend left=50] (b1k) to (w11);
  \draw [black,opacity=0.25, bend left=40] (b1k) to (w21);
  \draw [black,opacity=0.25, bend left=30] (b1k) to (w31);
  \draw [black,opacity=0.25, bend left=20] (b1k) to (w41);
  \end{scope}

  %---------------- W^2 leaves ----------------%
  \begin{scope}[xshift=1cm]
  \node (w12) [mr,sn,label=below:{$w_1^2$}] at (-0.75,-2) {};
  \node (w22) [mr,sn,label=below:{$w_2^2$}] at (-0.25,-2) {};
  \node (w32) [mr,sn,label=below:{$w_3^2$}] at (0.25,-2) {};
  \node (w42) [mr,sn,label=below:{$w_4^2$}] at (0.75,-2) {};
  \draw [black,opacity=0.25] (w12) -- (r3k)--(w22);  
  \draw [black,opacity=0.25] (w32) -- (r3k)--(w42);
  \draw [black,opacity=0.25, bend left=60] (r1k) to (w12);
  \draw [black,opacity=0.25, bend left=50] (r1k) to (w22);
  \draw [black,opacity=0.25, bend left=40] (r1k) to (w32);
  \draw [black,opacity=0.25, bend left=30] (r1k) to (w42);
  \draw [black,opacity=0.25, bend left=50] (r2k) to (w12);
  \draw [black,opacity=0.25, bend left=40] (r2k) to (w22);
  \draw [black,opacity=0.25, bend left=30] (r2k) to (w32);
  \draw [black,opacity=0.25, bend left=20] (r2k) to (w42);
  \end{scope}

  %---------------- Pair gadgets ----------------%
  \node (wb1r2) [mbl,label=above:{$w_{\{b_1,r_2\}}$}] at (-3, 2) {};
  \draw [black,opacity=0.25] (wb1r2) to (b2k);
  \draw [black,opacity=0.25] (wb1r2) to (r1k);
  \draw [black,opacity=0.25] (wb1r2) to (r3k);

  \node (wb1r3) [mbl,label=above:{$w_{\{b_1,r_3\}}$}] at (-1.5, 2) {};
  \draw [black,bend right=53,opacity=0.25] (wb1r3) to (b2k);
  
  %\draw [bend left=30] (r1k) edge (r3k);
  \draw [black,opacity=0.25] (wb1r3) to (r1k);
  \draw [black,opacity=0.25] (wb1r3) to (r2k);

  \node (wb2r2) [mbl,label=above:{$w_{\{b_2,r_2\}}$}] at (0, 2) {};
  \draw [black,opacity=0.25] (wb2r2) to (b1k);
  \draw [black,opacity=0.25] (wb2r2) to (r1k);
  \draw [black,opacity=0.25] (wb2r2) to (r3k);

  %---------------- W^3 leaves ----------------%
  \begin{scope}[xshift=3.5cm]
  \node (w13) [mg,sn,label=below:{$w_1^3$}] at (-0.75,-2) {};
  \node (w23) [mg,sn,label=below:{$w_2^3$}] at (-0.25,-2) {};
  \node (w33) [mg,sn,label=below:{$w_3^3$}] at (0.25,-2) {};
  \node (w43) [mg,sn,label=below:{$w_4^3$}] at (0.75,-2) {};
  \draw [black,opacity=0.25] (w13) -- (g1k)--(w23);  
  \draw [black,opacity=0.25] (w33) -- (g1k)--(w43);
  \end{scope}

  \node (wb2g1) [mbl,label=above:{$w_{\{b_2,g_1\}}$}] at (2, 2) {};
  \draw [black,opacity=0.25] (wb2g1) to (b1k);

  \node (wr3g1) [mbl,label=above:{$w_{\{r_3,g_1\}}$}] at (4, 2) {};
  \draw [black,opacity=0.25] (wr3g1) to (r1k);
  \draw [black,opacity=0.25] (wr3g1) to (r2k);

  \end{scope}
\end{tikzpicture}

\caption{On the left we see an example input graph $G$ for the \mcc\ problem with $V_1=\{b_1,b_2\},V_2=\{r_1,r_2,r_3\},V_3=\{g_1\}$ where $\{b_1,r_1,g_1\}$ forms a multicolored clique. On the right, we see graph $G'$. For each $i\in[3]$, a vertex $x_i$ is added and connected to every vertex in $V_i$ by edges of weight $2^{i-1}$. Except those edges having $x_i$ as an endpoint, all edges within the blue ellipsoid or leaving the blue ellipsoid have weight $2$, those in/leaving red have weight $4$, and those leaving $g_1$ (green) have cost $8$.
\label{fig:example:construction:thm1}}
\end{figure}

We show that $G$ has a multicolored clique of size $k$ if and only if $(X,d,k,\cost)$ has a \MSR\ clustering with at most $k$ clusters of cost at most $\cost$. For the forward direction, consider a \textsf{YES} instance $(G, k, (V_1,\ldots,V_k))$ of \mcc\ problem. Let $S=\{v_1,v_2,\ldots,v_k\}$ be a multicolored clique of size $k$ in $(G, k, (V_1, \ldots , V_k))$. We obtain a corresponding \MSR\ clustering $\mathcal{C}$ for $(X,d,k,\cost)$ as follows: for each  $v_i\in S$, we define the cluster $C_i$ with  center $v_i$ to contain the vertices within a distance of $2^i$ from $v_i$, i.e., $C_i=\{u: d(v_i,u)\leq 2^i \text{ under the metric } d\}$. 

Note that the number of clusters in $\mathcal{C}$ is $k$ and the cost $\cost$ of $\mathcal{C}$ is $2^1+2^2+\ldots+2^k=2^{k+1}-2$.
In the following, we show that $\mathcal{C}$ is a valid clustering, i.e., for every vertex $v \in X=V(G')$, there is a cluster in $\CCC$
  containing $v$.
\begin{itemize}

    \item For each $i\in [k]$, $x_i\in C_i$ because the edge $\{v_i,x_i\}$ has weight $2^{i-1}$ which is smaller than $2^i$, the radius of the cluster $C_i$.
    
    \item For each $i\in [k]$, $V_i\subseteq C_i$ because $V_i$ is an independent set and each vertex $v\in V_i$ can be reached by the center $v_i$ with a length two path $P$ via $x_i$. Hence, such a path has length $2\cdot 2^{i-1}=2^{i}$  which is equal to the radius of $C_i$.
    
    \item $W^i\subseteq C_i$ because $v_i$ is adjacent to each vertex in $W^i$ and each edge $\{v_i,w\}$ for a vertex $w\in W^i$, is of weight $2^{i}$, which is equal to the radius of $C_i$.

    \item $\overline{E}_{i,j}$ is covered in $\mathcal{C}$. Consider a vertex $w_e\in \overline{E}_{i,j}$ with $e=\{u,v\}$ where $u\in V_i$ and $v\in V_j$. As $u$ and $v$ are non-adjacent in $G$, we cannot have $u,v\in S$ because $S$ is a clique in $G$. Suppose $u\notin S$, then the chosen center $v_i\in S\cap V_i$ satisfies $v_i\neq u$, and by construction $w_e$ is adjacent to $v_i$ with weight $2^i$ which is exactly the radius of the cluster $C_i$. The symmetric argument holds for the case when $v\notin S$. Therefore, at least one of the cluster $C_i$ or $C_j$ contains $w_e$ and we assign $w_e$ to one of them arbitrarily if both are valid.  
\end{itemize}

For the reverse direction,  consider a \textsf{YES} instance $(X,d,k,\cost)$ of \MSR\ and let $\mathcal{C'}=(C_1,C_2,\ldots,C_k)$ be a clustering of $(X,d,k,\cost)$.
In the following, we observe that in any feasible clustering $\mathcal{C'}$ of $X$ with cost $\cost$, the centers lie inside the sets $V_i$.   

\begin{claim}\label{MainClm}
    Any valid clustering $\mathcal{C'}$ of $(X,d,k,\cost)$  contains, for each $1\leq i\leq {k}$, a cluster of radius at least $2^i$ centered at a vertex $v\in V_i$. 
\end{claim}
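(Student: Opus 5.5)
The plan is to use the sets $W^i$ as witnesses together with the exponential weight scale. First, for each $i\in[k]$ we have $|W^i|=k+1$, while $\mathcal{C'}$ has at most $k$ centers. So we can fix a vertex $w^i\in W^i$ that is not a center, and some cluster $C$ with center $c\neq w^i$ and radius $r$ must contain $w^i$.

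Next I record the local distance structure. In $G'$ the only neighbours of $w^i$ are the vertices of $V_i$, all via edges of weight $2^i$. The only neighbours of a vertex of $V_i$ are $x_i$ (weight $2^{i-1}$), the vertices of $W^i$ (weight $2^i$), and the vertices $w_e$ with $e$ incident to $V_i$ (weight $2^i$). This gives two bounds. If $c\in V_i$, then $d(c,w^i)=2^i$, since $w^i$ is adjacent to $c$; hence $r\ge 2^i$, which is what the claim asks for. If $c\notin V_i$, then any $c$--$w^i$ path ends with an edge of weight $2^i$ and, before that, reaches $V_i$ from outside through an edge of weight at least $2^{i-1}$; hence $r\ge 2^i+2^{i-1}$. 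I will also use the analogous distance bounds between different layers. Every path from $V_m$ to $V_j$ with $j\ne m$ passes through some $w_e$ and uses edges of weights $2^m$ and $2^j$. Consequently, if $c\in V_m$ and $j\neq m$, then $d(c,w^j)\ge 2^m+2^{j+1}$, and $d(w^m,w^j)\ge 2^{m+1}+2^{j+1}$.

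The global step is a budget argument with $\cost=2^{k+1}-2=\sum_{i=1}^k 2^i$. Assign to each $i$ a cluster $C(i)$ covering $w^i$. If the clusters $C(i)$ are pairwise distinct, then $\sum_i r(C(i))\ge\sum_i 2^i=\cost$. Equality is then forced everywhere, so no $C(i)$ can have its center outside $V_i$, since that would add an extra $2^{i-1}$. Thus each $C(i)$ is centered in $V_i$ with radius at least (in fact exactly) $2^i$, which proves the claim. It remains to rule out, or charge correctly, a single cluster covering several witnesses $w^i$, $i\in I$ with $|I|\ge2$. Let $m=\max I$ and let $j$ be the second largest index of $I$. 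If the center lies in $V_m$, the bound above gives $r\ge 2^m+2^{j+1}>\sum_{i\in I}2^i$. Summing this with the other clusters' radii exceeds $\cost$, which is a contradiction.

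The main obstacle is a shared cluster whose center is not in any $V_i$, for instance $x_m$ or a vertex $w_e$. In that case the triangle inequality only yields $r\ge 2^m+2^j$, which may fall short of $\sum_{i\in I}2^i$ when $|I|\ge 3$. My first attempt here is to exploit the freedom in choosing the witnesses. Each $W^i$ has $k+1$ non-adjacent copies, and such a center $c$ lies at distance at least $2^m+2^{m-1}$ from every copy in $W^m$. So I would instead account for the $k+1$ copies of $W^m$ together with $x_m$ and all of $V_m$. I would also use that at most $k$ centers exist, so a center wasted outside $\bigcup_i V_i$ leaves some layer $V_{i'}$ without a center. The whole of $W^{i'}$ must then be covered from outside at the premium cost $3\cdot 2^{i'-1}$. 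Inducting downward from $i=k$ and subtracting the forced cost $2^k$ of layer $k$ from the budget at each step should close the gap.
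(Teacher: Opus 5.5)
Your overall strategy, charging all $k$ witnesses at once, differs from the paper's, but as written it leaves the decisive case open. The parts you carry out are correct. These are the choice of non-centre witnesses $w^i$, the bounds $d(c,w^i)=2^i$ for $c\in V_i$ and $d(c,w^i)\ge 2^i+2^{i-1}$ otherwise, the cross-layer bounds, and the pairwise-distinct case. A shared cluster centred in some $V_p$ is also fine; for $p\neq m$ your bound $d(c,w^m)\ge 2^p+2^{m+1}$ does the job.

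The gap is the case you flag yourself: one cluster covering witnesses $w^i$, $i\in I$, $|I|\ge 2$, with centre outside $\bigcup_i V_i$. Your final summation needs \emph{every} cluster to satisfy $r(C)\ge\sum_{i\in I(C)}2^i$. So even the cases you settled do not yet give a contradiction. The fallback you sketch is not an argument. Saying that a layer $V_{i'}$ without a centre forces $W^{i'}$ to be covered ``at the premium cost $3\cdot 2^{i'-1}$'' yields no additive charge, because that premium may be paid by the same shared cluster you are trying to control. The downward induction is also never actually set up.

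The missing idea is to bound $d(c,w^m)$ directly, rather than via the triangle inequality through $d(w^m,w^j)$, which is where you lose a factor of two. Every neighbour of a vertex of $V_m$ other than $x_m$ is joined to it by an edge of weight $2^m$, and $x_m$ has no neighbours outside $V_m$. Hence the only vertices within distance less than $2^{m+1}$ of $w^m$ are $w^m$ itself, the vertices of $V_m$, and $x_m$.

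Since $I\subseteq[m]$ gives $\sum_{i\in I}2^i\le 2^{m+1}-2$, every centre outside $V_m\cup\{x_m\}$ overpays immediately. For $c=x_m$, with $j$ the second largest index in $I$, you get $r\ge d(x_m,w^j)\ge 2^{m-1}+2^m+2^{j+1}>\sum_{i\in I}2^i$. So every cluster obeys $r(C)\ge\sum_{i\in I(C)}2^i$, with equality only if $I(C)=\{i\}$ and the centre lies in $V_i$ (or $I(C)=\emptyset$ and $r=0$). Your summation then finishes the proof.

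For comparison, the paper does not charge all $k$ witnesses. It takes the largest index $\ell$ violating the claim, so the clusters for layers $\ell+1,\dots,k$ already use all but $2^{\ell+1}-2$ of the budget. It then shows that covering just two non-centre witnesses, one in $W^\ell$ and one in $W^{\ell-1}$, costs at least $2^{\ell+1}$ more. That step tacitly uses the same local fact: a centre outside $V_\ell$ that reaches $W^\ell$ with radius below $2^{\ell+1}$ must be $x_\ell$, and $x_\ell$ is far from $W^{\ell-1}$.
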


\begin{proof} Consider any clustering $\mathcal{C}'=\{C_1,C_2,\ldots,C_k\}$ of $X=V(G')$ with a cost at most $\cost=2^{k+1}-2$.
Let $\ell$ be the largest index such that there is no cluster in $\mathcal{C}'$ centered at a vertex $v\in V_{\ell}$ for $1\leq \ell \leq k$ with a radius at least $2^{\ell}$. Thus, for each $\ell+1\leq j\leq k$, there exists a cluster $C_j\in \mathcal{C}'$ centered at a vertex $v\in V_j$ with a radius at least $2^j$. Since $W^{\ell}$ has $k+1$ vertices, by the pigeonhole principle there is a vertex $w\in W^{\ell}$ such that no cluster in $\mathcal{C}'$ is centered at $w$. Let $C\in \mathcal{C}'$ be a cluster that contains $w$. We consider the following two cases:

\emph{\textbf{Case 1: Suppose $C=C_j$ for some $\ell+1\leq j \leq k$.}} We observe that the cluster $C$ contains $w\in W^{\ell}$ and by the choice of index $j$, the center of cluster $C$ is in $V_j$. Therefore, the radius of $C$ must be at least $2^j+2\cdot 2^{\ell}$. Summing the radii of the clusters indexed from $j=\ell+1$ to $k$, we obtain that the total cost of the clustering is at least $2^k+2^{k-1}+\ldots + 2^{\ell+1}+ 2\cdot 2^{\ell}$. Let $S=2^k+2^{k-1}+\ldots + 2^{\ell+1}+ 2^{\ell}$, then $2S-S= 2^{k+1}-2^{\ell}$. Thus the cost of the clustering $\mathcal{C}'$ is at least $S+2^{\ell}=2^{k+1}>2^{k+1}-2$.

\emph{\textbf{Case 2: Suppose $C\neq C_j$ for any $\ell+1\leq j \leq k$.}} We observe that any cluster containing $w \in W^{\ell}$ where $w$ is not a center has a radius of at least $2^{\ell}$ because $w$ is only adjacent to points in $V_{\ell}$ via edges of weight $2^{\ell}$.  Moreover, by the definition of the index $\ell$, the center of the cluster $C$ is not in $V_{\ell}$ and hence the radius of the cluster $C$ is at least $2^{\ell-1}+2^{\ell}$. Furthermore, we can assume that $\ell>1$, as otherwise if $\ell=1$ then the cost of the clustering $\mathcal{C}'$ would be at least $2^0+2^1+2^2+\ldots+ 2^k>\cost$. 

Therefore, as $\ell>1$ and $|W^{\ell-1}|=k+1$, by the pigeonhole principle, there exists $w'\in W^{\ell-1}$ which is not the center of any cluster in $\mathcal{C}'$. %Let $C'$  be the cluster that contains $w'$. 
Since all incident edges on $w'$ have cost $2^{\ell-1}$, and the cluster $C$ with radius $2^{\ell-1}+2^{\ell}$ cannot cover $w'$, in addition to the sum of the radii of the clusters indexed at $j$ for $\ell+1\leq j\leq k$, to cover $w$ (by cluster $C$) and $w'$ (by cluster $C$ or otherwise), the additional radius of at least $2^{\ell-1}+2^{\ell}+2^{\ell-1}$ must be added. Hence, the cost of clustering is at least $2^k+2^{k-1}+\ldots + 2^{\ell+1}+2^{\ell-1}+2^{\ell}+2^{\ell-1}= 2^{k+1}>2^{k+1}-2$.

Thus, in both cases, we get a contradiction to the assumption that the cost of the clustering  $\mathcal{C}'$ is at most $\cost=2^{k+1}-2$.
\end{proof}

By \autoref{MainClm}, the clustering $\mathcal{C'}$ with a cost at most $\cost=2^{k+1}-2$ contains clusters $C_1,C_2,\ldots,C_k$, where each cluster $C_i$ centered at a vertex $v\in V_i$ and of radius exactly equal to $2^i$. 

\begin{claim} Let $\mathcal{C'}$ be a \MSR\ clustering of $(X,d,k,\cost)$ of cost at most $\cost=2^{k+1} - 2$. Let $S' = \{v_1, v_2, \ldots, v_k\}$ be the set of cluster centers in $\mathcal{C'}$. Then $S'$ forms a multicolored clique of size $k$ in $G$.
\end{claim}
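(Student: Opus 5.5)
By \autoref{MainClm} together with the budget, the clustering $\mathcal{C'}$ has exactly $k$ clusters. For each $i\in[k]$ the cluster $C_i$ is centered at some $v_i\in V_i$ and has radius at least $2^i$. Since $\sum_{i=1}^k 2^i = 2^{k+1}-2=\cost$, every radius is exactly $2^i$. So $S'$ contains exactly one vertex from each $V_i$; this is the multicolored part. It remains to show that $v_i$ and $v_j$ are adjacent in $G$ for all $i\neq j$. I would argue by contradiction: suppose $e=\{v_i,v_j\}\notin E(G)$ with $v_i\in V_i$, $v_j\in V_j$ and $i<j$. Then the construction contains the vertex $w_e\in\overline{E}_{i,j}$, and it must lie in some cluster $C_\ell$ of radius $2^\ell$ centered at $v_\ell$.

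\textbf{Case $\ell\in\{i,j\}$.} By construction $w_e$ is not adjacent to $v_i$ or $v_j$. Its neighbours are the vertices of $(V_i\cup V_j)\setminus\{v_i,v_j\}$, with edge weights $2^i$ on the $V_i$ side and $2^j$ on the $V_j$ side. Every vertex of $V_i$ other than $v_i$ is at distance at least $2^i$ from $v_i$: the $V_i$ form an independent set, and the cheapest detour is via $x_i$ at cost $2\cdot 2^{i-1}$. Hence $d(v_i,w_e)\ge 2^i+2^i>2^i$. The same argument gives $d(v_j,w_e)\ge 2^j+2^j>2^j$ when going through $V_j$. A route from $v_i$ through $V_j$ is even longer, because $V_i$ and $V_j$ are only linked through the right-hand side vertices with edge weights at least $2^i$. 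So neither $C_i$ nor $C_j$ covers $w_e$.

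\textbf{Case $\ell\notin\{i,j\}$.} The vertex $w_e$ is adjacent only to vertices of $V_i\cup V_j$. Any path from $v_\ell\in V_\ell$ to $w_e$ must therefore leave $V_\ell$, costing at least $2^{\ell-1}$ (via $x_\ell$) or $2^\ell$. After that it must pass through some vertex of $V_i\cup V_j$ and then take the final edge into $w_e$, of weight at least $2^i$. Passing through an intermediate right-hand vertex to reach $V_i\cup V_j$ costs at least $2^{\min(\ell,i)}$ per edge, and apex vertices $x_m$ only lead back into $V_m$.

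The main obstacle is the careful lower bound in this second case. I would show that the cheapest $v_\ell$--$w_e$ path has length strictly greater than $2^\ell$. The natural route is: $v_\ell$ to some right-hand vertex adjacent to both $V_\ell$ and $V_i$ (a $w_{e'}$ with $e'$ between classes $\ell$ and $i$), then to $V_i$, then to $w_e$. This costs $2^\ell+2^i+2^i>2^\ell$. Every path contains at least one edge incident to $V_\ell$, of weight at least $2^{\ell-1}$, and to make progress it must leave the $x_\ell$-star using an edge of weight $2^\ell$. This already exhausts the radius before $w_e$ is reached, so the final edge of weight $\ge 2^i>0$ pushes the length past $2^\ell$.

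In all cases $w_e$ is left uncovered, contradicting the validity of $\mathcal{C'}$. Hence all pairs of centers are adjacent in $G$, and $S'$ is a multicolored clique of size $k$.
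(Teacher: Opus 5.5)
Your proof is correct and follows the same route as the paper. You use \autoref{MainClm} and the tight budget to fix every cluster as centered at some $v_\ell\in V_\ell$ with radius exactly $2^\ell$. You then show that for a non-edge $e=\{v_i,v_j\}$ the vertex $w_e$ lies at distance more than $2^\ell$ from every $v_\ell$, splitting into the cases $\ell\in\{i,j\}$ and $\ell\notin\{i,j\}$ exactly as the paper does.

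Your version is more explicit than the paper's, which only asserts these distance bounds. The route from $v_j$ via $V_i$ is one you did not spell out, but the fact behind both of your cases covers it. Any path leaving $V_\ell\cup\{x_\ell\}$ must use an edge of weight $2^\ell$. After that, it either needs a further positive-weight edge to reach $w_e$, or it left from a vertex of $V_\ell\setminus\{v_\ell\}$, which is itself at distance at least $2^\ell$ from $v_\ell$.
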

 
\begin{proof} For the sake of contradiction, suppose $e=\{v_i,v_j\}$ where $v_i,v_j\in S$ for $1\leq i\neq j\leq k$, is a non-edge in $E(G)$. Thus by construction, $w_e$ is not adjacent to $v_i$ and $v_j$ in $G'$. This implies that to cover $w_e$, one of the clusters in $\mathcal{C}'$ centered at $v_i\in V_i$ or $v_j\in V_j$ must have radius larger than $2^i$ or $2^j$. Consequently, by \autoref{MainClm} the cost of clustering $\mathcal{C'}$ is more than $\cost=2^{k+1}-2$, a contradiction to the total cost of the clustering $\mathcal{C'}$. Similarly, if there is another cluster $C_h\in \mathcal{C}'$ centered at $v_h\in S'$ for $1\leq h\neq i,j\leq k$ that covers $w_e$, its radius would be larger than $2^h$, which contradicts the total cost of the clustering $\mathcal{C'}$ due to \autoref{MainClm}. Therefore, the set $S'=\{v_1,v_2,\ldots,v_k\}$ is a multicolored clique of size $k$ in $G$ and thus the instance $(G, k, (V_1, \ldots , V_k))$ is a \textsf{YES} instance of \textsc{Multicolored Clique}.
\end{proof}

\mcc\ is \W[1]-hard when parameterized by $k$ the size of multicolored clique (Fellows et al.,~\cite{FellowsHRV09}, Pietrzak~\cite{Pietrzak03}). Thus as in our reduction, $k$ the size of multicolored clique is equal to the number of clusters in $\mathcal{C}'$, and the cost $\cost$ is upper bounded by a function of $k$, e.g., $g(k)=2^{k+1}$, we get that $\MSR$ is \W[1]-hard when parameterized by the combined parameter $k$, the number of clusters and $\cost$, the cost of the clustering.

Due to Corollary 14.23 in Cygan et al.~\cite{CyganFKLMPPS15}, we know that under $\ETH$, there is no $f(k)\cdot n^{o(k)}$-time algorithm for \mcc\ for any computable function $f$. As the reduction takes polynomial time, and the parameter $k+\cost=k+2^{k+1}-2$ grows exponentially with respect to the size of multicolored clique in $G$, we conclude that there is no $f(k)\cdot N^{o(k+\log_2\cost)}$-time algorithm for \MSR\ for any computable function $f$ under $\ETH$, where $N$ is the number of input points.
 \end{proof}

\subsection{Employing Structural Parameters for Sparse Graphs}\label{ssec:wkvc}

In this subsection, we show that \MSR\ on undirected weighted
bipartite graphs is \W[1]-hard parameterized by $k+\vc$, where $\vc$
is the vertex cover number of the graph.
We will use a
reduction from the \mcc\ problem, which is well-known to be \W[1]-hard
parameterized by the size of the multicolored clique~\cite[Lemma 1]{FellowsHRV09}.

\begin{definition}[\mcc]
  Given an undirected graph $G$, an integer $k$, and a partition
  $(V_1,V_2,\ldots,V_k)$ of the vertices of $G$; the task is to decide
  whether there is a $k$-clique containing exactly one vertex from each set
  $V_i$.
\end{definition}

\thmvwk*

\begin{proof}
  We provide a polynomial-time parameterized reduction from the \mcc\ problem. Since the parameter in our reduction 
  only grows linearly, additionally to \W[1]-hardness, we also obtain that there is no
  $f(vc+k)\cdot N^{o(vc+k)}$-time algorithm for \MSR\ under $\ETH$, where $N$ denotes the number of input points in the \MSR\ instance. 
  
    Let
  $(G, k, (V_1, \ldots , V_k))$ denote an instance of the \mcc\
  problem. Without loss of generality, we can assume that
  $V_i=\{v^i_1, \dotsc, v^i_n\}$ for every $i \in [k]$. We create an
  instance $(X,d,k',\cost)$ of the \MSR\ problem, where $k'=2k$, 
  $X$ is the set of vertices of the weighted undirected graph $G'$ and
  $d$ is the metric induced by $G'$ as follows; see also \Cref{fig:gadget-eij} for
  an illustration of the construction. The graph $G'$ contains the
  following vertices:
  \begin{itemize}
  \item one vertex $v$ for every $v \in V(G)$,
  \item two vertices \low\ and \high\ for every $i \in [k]$,
  \item two sets $W^i_{\lowl}$ and $W^i_{\highl}$ of
    $2k+1$ vertices each for every $i \in [k]$,
  \item a vertex $w_e$ for every non-edge $e =\{u,v\}\notin E(G)$ with endpoints $u
    \in V_i$ and $v \in V_j$ where $1\leq i< j\leq k$.
    Let $\overline{E}_{i,j}$ denote the set of these vertices.
  \end{itemize}

  \newcommand{\wfl}{\omega_{\lowl}}
  \newcommand{\wfh}{\omega_{\highl}}
Moreover, $G'$ has the following edges:
  \begin{itemize}
  \item an edge $\{w,\low\}$ of weight $\wfl(i)=2^{2i+1}\cdot i\cdot n$ for every $w \in W^i_{\lowl}$ and every
    $i\in [k]$,
  \item an edge $\{w,\high\}$ of weight $\wfh(i)=2^{2i}\cdot i\cdot n$ for every $w \in W^i_{\highl}$ and every
    $i\in [k]$,    
  \item an edge $\{\high,v^i_h\}$ of weight $h+\wfh(i)$
    for every $i \in [k]$ and $h \in [n]$,
  \item an edge $\{\low,v^i_h\}$ of weight $n-h+1+\wfl(i)$
    for every $i \in [k]$ and $h \in [n]$,
  \item an edge $\{x_{\highl}^i,w_e\}$ of weight
    $h+1+\wfh(i)$ for every $i \in [k]$ and non-edge
    $e=\{v_h^i,u\}\notin E(G)$ with $u \in V(G)\setminus V_i$.
  \item an edge $\{x_{\lowl}^i,w_e\}$ of weight
    $n-h+1+\wfl(i)$
    for every $i \in [k]$ and non-edge
    $e=\{v_h^i,u\}\notin E(G)$ with $u \in V(G)\setminus V_i$.
  \end{itemize}

    Finally, we set the cost $\Delta$ to
    $nk+\sum_{i=1}^k(\wfh(i)+\wfl(i))$.     
  This completes the construction of the instance $(X,d,k',\cost)$ of the
  \MSR{} problem. Note that the construction can be
  achieved in $\poly(|V|,k)$-time, i.e., polynomial-time.

    \begin{figure}[t]
    \centering
\begin{tikzpicture}[ 
    vertex/.style={circle, draw, fill=black, inner sep=2pt}, 
    every label/.style={font=\small},
    ellipsis/.style={}
] 

% V_i (left) %
\node[vertex,fill= Carmine, label=left:{$v^i_1$}] (vi1) at (0,1.3) {}; 
\node[vertex, fill= Carmine, label=left:{$v^i_2$}] (vi2) at (0,0.8) {}; 
\node[vertex, fill= Carmine, label=left:{$v^i_h$}] (vi3) at (0,0) {}; 
\node[vertex, fill= Carmine, label=left:{$v^i_n$}] (vin) at (0,-1.3) {}; 
% Replace dotted lines with ellipsis pattern
\path (vi2) -- (vi3) node[pos=0.5] {$\ldots$};
\path (vi3) -- (vin) node[pos=0.5] {$\ldots$};
\draw[
    thick,
    fill=Carmine,
    fill opacity=0.15
] (0,0) ellipse [x radius=1.2cm, y radius=1.8cm];
 
\node at (-1.6,0) {$V_i$}; 

% x^i centers %
\node[vertex,label=below:{$x^i_{\highl}$}] (xih) at (2,0.75) {}; 
\node[vertex,label=above:{$x^i_{\lowl}$}] (xil) at (2,-0.75) {};  

% W^i sets % 
\draw[fill=gray!2.5, rounded corners] (1.25,2.0) rectangle (2.75,3.0); 
\node at (2,3.3) {$W^i_{\highl}$}; 
\node[vertex] (wih1) at (1.5,2.5) {}; 
\node[vertex] (wih2) at (1.85,2.5) {}; 
\node[vertex] (wih3) at (2.35,2.5) {}; 
% Replace dotted line with ellipsis
\path (wih2) -- (wih3) node[pos=0.5] {$\ldots$};

\draw[fill=gray!2.5, rounded corners] (1.25,-3.0) rectangle (2.75,-2.0);
\node[vertex] (wil1) at (1.5,-2.5) {}; 
\node[vertex] (wil2) at (1.85,-2.5) {}; 
\node[vertex] (wil3) at (2.35,-2.5) {}; 
% Replace dotted line with ellipsis
\path (wil2) -- (wil3) node[pos=0.5] {$\ldots$};
\node at (2,-3.25) {$W^i_{\lowl}$};

% edges from x^i_high to W^i_high
\draw (xih) -- (wih1);
\draw (xih) -- (wih2);
\draw (xih) -- (wih3);

% edges from x^i_low to W^i_low
\draw (xil) -- (wil1);
\draw (xil) -- (wil2);
\draw (xil) -- (wil3);

 % E_{i,j} %
\draw[
    thick,
    rounded corners
] (3.4,-2.3) rectangle (4.6,2.3);

\node at (4,2.6) {$E_{i,j}$}; 
\node[vertex] (e1) at (4,2) {}; 
\node[vertex] (e2) at (4,1.5) {}; 
\node[vertex] (e3) at (4,1) {}; 
\node[vertex] (e4) at (4,0.5) {}; 
\node[vertex,label= $w_e$] (e6) at (4,-0.5) {}; 
\node[vertex] (e7) at (4,-1.5) {}; 
\node[vertex] (e8) at (4,-2) {};
% Replace dotted line with ellipsis
\path (e6) -- (e7) node[pos=0.5] {$\ldots$};
\path (e4) -- (e6) node[pos=0.3] {$\ldots$};
 %\draw[thick] (4,0) ellipse [x radius=1.2cm, y radius=1.8cm]; 

% x^j centers %
\node[vertex, label=below:{$x^j_{\highl}$}] (xjh) at (6,0.75) {}; 
\node[vertex, label=above:{$x^j_{\lowl}$}] (xjl) at (6,-0.75) {}; 

 % W^j sets %
\draw[fill=gray!2.5, rounded corners] (5.25,2.0) rectangle (6.75,3.0); 
\node at (6,3.3) {$W^j_{\highl}$}; 
\node[vertex] (wjh1) at (5.5,2.5) {}; 
\node[vertex] (wjh2) at (5.85,2.5) {}; 
\node[vertex] (wjh3) at (6.35,2.5) {}; 
% Replace dotted line with ellipsis
\path (wjh2) -- (wjh3) node[pos=0.5] {$\ldots$};
\draw[fill=gray!2.5, rounded corners] (5.25,-3.0) rectangle (6.75,-2.0); 
\node at (6,-3.2) {$W^j_{\lowl}$}; 
\node[vertex] (wjl1) at (5.5,-2.5) {}; 
\node[vertex] (wjl2) at (5.85,-2.5) {}; 
\node[vertex] (wjl3) at (6.35,-2.5) {}; 
% Replace dotted line with ellipsis
\path (wjl2) -- (wjl3) node[pos=0.5] {$\ldots$};

% edges from x^j_high to W^j_high
\draw (xjh) -- (wjh1);
\draw (xjh) -- (wjh2);
\draw (xjh) -- (wjh3);
% edges from x^j_low to W^j_low
\draw (xjl) -- (wjl1);
\draw (xjl) -- (wjl2);
\draw (xjl) -- (wjl3);

% V_j (right) %
\node[vertex, fill=BrownDrab, label=right:{$v^j_1$}] (vj1) at (8,1.3) {};
\node[vertex, fill=BrownDrab, label=right:{$v^j_2$}] (vj2) at (8,0.8) {};
\node[vertex, fill=BrownDrab, label=right:{$v^j_{h'}$}] (vj3) at (8,0) {};
\node[vertex, fill=BrownDrab, label=right:{$v^j_n$}] (vjn) at (8,-1.3) {};
% Replace dotted lines with ellipsis
\path (vj2) -- (vj3) node[pos=0.5] {$\ldots$};
\path (vj3) -- (vjn) node[pos=0.5] {$\ldots$};
\draw[
    thick,
    fill=BrownDrab,
    fill opacity=0.15
] (8,0) ellipse [x radius=1.2cm, y radius=1.8cm];

\node at (9.6,0) {$V_j$};

 % Correct edges %------------------------
%x^i to V_i% 
\draw[thick] (xih) -- (vi1); 
\draw[thick] (xih) -- (vi2); 
\draw[thick] (xih) -- (vi3); 
\draw[thick] (xih) -- (vin); 
\draw[thick] (xil) -- (vi1); 
\draw[thick] (xil) -- (vi2); 
\draw[thick] (xil) -- (vi3); 
\draw[thick] (xil) -- (vin); 

% x^i to E_{i,j} (partial adjacency)%
\draw[thick] (xih) -- (e1); 
\draw[thick] (xih) -- (e2); 
\draw[thick] (xih) -- (e3); 
\draw[thick] (xih) -- (e4); 
\draw[thick] (xih) -- (e6); 
\draw[thick] (xih) -- (e7); 
\draw[thick] (xih) -- (e8); 
\draw[thick] (xil) -- (e1); 
\draw[thick] (xil) -- (e2); 
\draw[thick] (xil) -- (e3); 
\draw[thick] (xil) -- (e4); 
\draw[thick] (xil) -- (e6); 
\draw[thick] (xil) -- (e7); 
\draw[thick] (xil) -- (e8);

% x^j to E_{i,j} (different neighbors) 
\draw[thick] (xjh) -- (e1); 
\draw[thick] (xjh) -- (e2); 
\draw[thick] (xjh) -- (e3); 
\draw[thick] (xjh) -- (e4); 
\draw[thick] (xjh) -- (e6); 
\draw[thick] (xjh) -- (e7); 
\draw[thick] (xjh) -- (e8); 
\draw[thick] (xjl) -- (e1); 
\draw[thick] (xjl) -- (e2); 
\draw[thick] (xjl) -- (e3); 
\draw[thick] (xjl) -- (e4); 
\draw[thick] (xjl) -- (e6); 
\draw[thick] (xjl) -- (e7); 
\draw[thick] (xjl) -- (e8); 

% x^j to V_j 
\draw[thick] (xjh) -- (vj1); 
\draw[thick] (xjh) -- (vj2); 
\draw[thick] (xjh) -- (vj3); 
\draw[thick] (xjh) -- (vjn); 
\draw[thick] (xjl) -- (vj1); 
\draw[thick] (xjl) -- (vj2); 
\draw[thick] (xjl) -- (vj3); 
\draw[thick] (xjl) -- (vjn); 

\end{tikzpicture}
\caption{ Gadget illustrating the interaction between color classes $V_i$ and $V_j$ in the reduction. 
For each $i \in [k]$, the two potential centers $x^i_{\highl}$ and $x^i_{\lowl}$ are connected to all vertices of $V_i$ and to two sets of $2k+1$ leaves $W^i_{\highl}$ and $W^i_{\lowl}$, which enforce the selection of both centers via large-radius penalties. 
Edges from $x^i_{\highl}$ (resp.\ $x^i_{\lowl}$) to $v_h^i \in V_i$ have weight $h+\omega_{\highl}(i)$ (resp.\ $n-h+1+\omega_{\lowl}(i)$), and edges to leaves in $W^i_{\highl}$ (resp.\ $W^i_{\lowl}$) have weight $\omega_{\highl}(i)$ (resp.\ $\omega_{\lowl}(i)$). 
For every non-edge $e=\{u,v\}\notin E(G)$ with $u=v_h^i \in V_i$ and $v= v^j_{h'} \in V_j$, a vertex $w_e$ (shown in the box $E_{i,j}$) is added and connected to $x^i_{\highl},x^i_{\lowl}$ with weights $h+1+\omega_{\highl}(i)$ and $n-h+1+\omega_{\lowl}(i)$ (and symmetrically to $x^j_{\highl},x^j_{\lowl}$ for $j$). 
Selecting incompatible vertices in $V_i$ and $V_j$ forces some $w_e$ to be covered at a prohibitively large radius, while a multicolored clique yields a feasible clustering of total cost at most $\Delta$.}
 \label{fig:gadget-eij}
\end{figure}
  
 \begin{intuitionbox}
The core strategy of the reduction is to treat the $2k$ cluster centers as rigid vertex \emph{selectors}, and their chosen radii as the index \emph{decisions}. 

To simulate selecting exactly one vertex $v_{h_i}^i$ from each color class $V_i$, we introduce a pair of anchor vertices for each class: a high anchor $x^i_{\highl}$ and a low anchor $x^i_{\lowl}$. To tightly lock the centers onto these anchors, we attach sets of $2k+1$ leaf vertices ($W^i_{\highl}, W^i_{\lowl}$) to them. Because the number of clusters is strictly limited to $2k$, at least one leaf vertex in each set cannot be chosen as a center. Covering these unchosen leaves within the budget forces every single anchor to be selected as a cluster center. Crucially, the total budget allocated to the radii of these two anchors forms an exact zero-sum trade-off: if the radius of $x^i_{\highl}$ is expanded to $h_i + \wfh(i)$ to cover a prefix of vertices $\{v_1^i, \dots, v_{h_i}^i\}$, the radius of $x^i_{\lowl}$ is compressed to exactly $(n - h_i) + \wfl(i)$, which is just enough to cover the complementary suffix $\{v_{h_i+1}^i, \dots, v_n^i\}$. This prefix-suffix split structurally locks in the choice of a single index $h_i \in [n]$ per color class.
\end{intuitionbox}
\begin{intuitionbox}
To validate that the vertices correspond to these chosen indices form a clique, we introduce a non-edge vertex $w_e$ for every non-edge $e \notin E(G)$ between color classes $V_i$ and $V_j$. The edge weights connecting these non-edge vertices to the anchors are specifically designed to mirror the prefix-suffix split. As a result, the anchors of color class $i$ will automatically cover $w_e$ \emph{unless} $e$ happens to be incident to the selected vertex $v_{h_i}^i$. Symmetrically, the anchors of class $j$ cover $w_e$ \emph{unless} it is incident to $v_{h_j}^j$. Therefore, if a non-edge exists between two selected vertices, neither class has enough radius to reach it, leaving $w_e$ completely exposed. The non-edge gadgets are fully covered if and only if the selected vertices share zero non-edges, signifying a valid multicolored clique.

Finally, to rigidly enforce this isolation and prevent metric "spillover" (where a cluster from color class $i$ maliciously helps cover vertices in a class $j$), the base weights $\wfh(i)$ and $\wfl(i)$ are scaled exponentially by a factor of $2^{2i} \cdot i \cdot n$. This exponential separation in scale \emph{overshadows} the local coordinates ($n$), meaning any cross-color coverage between different parts immediately violates the tight budget $\Delta$.
\end{intuitionbox}
  
  Moreover, $G'$ is bipartite and because
  $G'-\{x_{\highl}^1,x_{\lowl}^1,\dotsc,x^k_{\highl},x^k_{\lowl}\}$ is an independent set, it holds that the vertex cover number of $G'$ is at most $k'=2k$.
  It therefore only remains to show that $G$ has
  a multicolored $k$-clique if and only if $(X,d,k',\cost)$ has a clustering into at most
  $k'$
  clusters of cost at most $\cost$.

  We start by showing the following simple observation.
  \begin{observation}\label{obs:edge-dist}
    Let $\{u,v\}$ be an edge of $G'$ of weight $t>0$. Then, $d(u,v)\geq t$.
  \end{observation}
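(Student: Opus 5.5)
The plan is to show that no path between $u$ and $v$ is shorter than the direct edge. The distance $d(u,v)$ is the minimum weight of a $u$--$v$ path in $G'$. The direct edge is one such path, so $d(u,v)\le t$. For the lower bound it therefore suffices to show that every other $u$--$v$ path $P$ has weight at least $t$.

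Such a path $P$ has at least two edges. Since $G'-\{x^1_{\highl},x^1_{\lowl},\dotsc,x^k_{\highl},x^k_{\lowl}\}$ is an independent set, every edge of $G'$ has exactly one endpoint among the anchors $x^i_{\highl},x^i_{\lowl}$. Hence $P$ contains at least one edge at $u$ and one edge at $v$. Moreover, every edge has weight at least $\min_i \wfh(i)$.

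The case analysis runs as follows.
\begin{itemize}
\item Suppose the edge $\{u,v\}$ is incident to $x^i_{\highl}$. Its weight is then $\wfh(i)$, $h+\wfh(i)$ or $h+1+\wfh(i)$. In every case $t\le n+1+\wfh(i)$.
\item Suppose the edge is incident to $x^i_{\lowl}$. Then $t\le n+1+\wfl(i)$.
\end{itemize}
Let $a\in\{u,v\}$ be the anchor endpoint and $b$ the other endpoint. The first edge of $P$ leaving $a$ has weight at least the base weight $\omega$ of $a$, which is $\wfh(i)$ or $\wfl(i)$. Some further edge of $P$ must return to $b$. If $b$ is a leaf in $W^i_{\pm}$, then $a$ is its only neighbour, so no alternative path exists and the claim is immediate. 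Otherwise this further edge has weight at least $\min_j \wfh(j)\ge \wfh(1)=4n$. Thus the total weight of $P$ is at least $\omega+4n$, which exceeds $\omega+n+1\ge t$ when $n\ge 1$.

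The only delicate point is the arithmetic comparing the additive offsets, which are at most $n+1$, with the base weights, which are at least $4n$. I expect this comparison to be the one step needing care. It goes through because $\wfh(1)=4n$ and the offsets are bounded by $n+1<4n$ for $n\ge1$.

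Alternatively, a cleaner route is to note that $P$ contains at least two edges, and the edge of $P$ incident to $a$ alone already weighs at least $\omega$. It then only remains to show that the remaining edges cover the offset $t-\omega\le n+1$, which holds because the remaining weight is at least $4n$.
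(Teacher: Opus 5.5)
Your proof is correct and is essentially the paper's argument. Any alternative $u$--$v$ path has at least two edges: the edge at the anchor endpoint weighs at least that anchor's base weight, and another edge weighs more than the offset $n+1$. The only difference is that you bound this second edge by the global minimum edge weight $\omega_{+}(1)=4n$, whereas the paper does a case analysis on which neighbour ($x^i_{-}$, $x^j_{+}$ or $x^j_{-}$) the path uses to enter $v$; your version is a harmless streamlining of the same idea.
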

  \begin{proof}
%Let $\{u,v\}$ be an edge of $G'$ of weight $w$.
By construction, every edge of $G'$ is incident
to an anchor vertex of the form $x^i_{\highl}$ or $x^i_{\lowl}$ for some $i \in [k]$.
Hence, without loss of generality, we may assume that $u = x^i_{\highl}$ for some $i \in [k]$;
the case $u = x^i_{\lowl}$ is symmetric.

If $v \in W^i_{\highl}$, then $v$ has degree one in $G'$, and hence $\{u,v\}$ is the unique
$u$--$v$ path. Thus $d(u,v) = t$.

Otherwise, let $P$ be any $u$--$v$ path that does not use the edge $\{u,v\}$.
The first edge of $P$ leaves $u=x^i_{\highl}$, and by construction every edge incident to
$x^i_{\highl}$ has weight at least $\wfh(i)$. We distinguish the following two cases.

\smallskip
\noindent\emph{\textbf{Case 1: $v \in V_i$.}}
Then as $v$ is adjacent exactly to $u=x^i_{\highl}$ and $x^i_{\lowl}$, the last edge of $P$ must enter $v$ from $x^i_{\lowl}$, and
hence has weight at least $\wfl(i)$.
Therefore, $w(P) \ge \wfh(i)+ \wfl(i)$.

\smallskip
\noindent\emph{\textbf{Case 2: $v = w_e \in \overline{E}_{i,j}$ for some $j \in [k]$ and $j \neq i$.}}
Any $u$--$v$ path $P$ that avoids $\{u,v\}$ must enter $v$ from one of its remaining neighbors $x^i_{\lowl}, x^j_{\highl}$, or $x^j_{\lowl}$. 
If $P$ enters $v$ from $x^i_{\lowl}$, then the last edge has weight at least $\wfl(i)$, and
hence $w(P) \ge \wfh(i) + \wfl(i)$.
Otherwise, $P$ enters $v$ from $x^j_{\highl}$ or $x^j_{\lowl}$, in which case the last edge
has weight at least $\wfh(j)$ or $\wfl(j)$, respectively, and hence
$w(P) \ge \wfh(i) + \wfh(j)$. Therefore, $w(P) \ge \wfh(i) +\min\{\wfl(i), \wfh(j)\}$.

\smallskip
In both cases, by construction the weight of the direct $\{u,v\}$ edge satisfies
$t \le \wfh(i) + n + 1$, while for all $j \ge 1$ we have
$\wfh(j) > n + 1$ and $\wfl(i) > n+1$.
Hence $w(P) > t$, and therefore $d(u,v) \ge t$.
\end{proof}

  Towards showing the forward direction, let
 $S=\{v_{h_{1}}^1,v_{h_2}^2,\ldots,v_{h_k}^k\}$ be a multicolored
  clique of size $k$ in $G$ for some $h_i \in [n]$ for every $i \in [k]$. We obtain a
  corresponding clustering $\CCC$ for $(X,d,k',\cost)$ as follows. For each
  $1\leq i\leq k$, the clustering  $\CCC$ contains two clusters $C^i_{\lowl}$ and
  $C^i_{\highl}$ with centers $x^i_{\lowl}$ and
  $x^i_{\highl}$ and radii $r^i_{\lowl}=n-h_i+2^{2i+1}\cdot i\cdot n$ and
  $r^{i}_{\highl}=h_i+2^{2i}\cdot i\cdot n$, respectively.

  Clearly, the number of clusters $k'=2k$
  %\todo{\textcolor{white}{Pankaj: Have we defined $|\mathcal{C}|$}} 
  and the overall cost of the clustering $\mathcal{C}$ is 
  \[ \sum_{i\in [k]}\bigg((r^i_{\highl})+(r^i_{\lowl})\bigg)=\sum_{i\in [k]}\bigg((h_i+2^{2i}\cdot i\cdot n)+(n-h_i+2^{2i+1}\cdot i\cdot n)\bigg)= kn+\sum_{i\in [k]}3\cdot (2^{2i}\cdot i\cdot n).\] 
  It therefore merely remains to show
  that for every vertex $v \in X=V(G')$, there is a cluster in $\CCC$
  containing $v$. We first show that the vertices in $V_i$'s, $W^i_{\lowl}$, and $ W^i_{\highl}$ are covered.
  \begin{itemize}
  \item $W^i_{\lowl} \subseteq C^i_{\lowl}$ because the weight $2^{2i+1}\cdot i\cdot n$ of
    any edge $\{\low,w\}$ for $w \in W^i_{\lowl}$ is at most the radius
    $r^i_{\lowl}=n-h_i+2^{2i+1}\cdot i\cdot n$ of $C^i_{\lowl}$ (as $1\leq h_i\leq n$).
  \item $W^i_{\highl} \subseteq C^i_{\highl}$ because the weight $2^{2i}\cdot i\cdot n$ of
    any edge $\{\high,w\}$ for $w \in W^i_{\highl}$ is at most the radius
    $r^i_{\highl}=h_i+2^{2i}\cdot i\cdot n$ of $C^i_{\highl}$ (as  $1\leq h_i\leq n$).
  \item $\{v_1^i,\dotsc, v_{h_i}^i\} \subseteq C^i_{\highl}$ because
    the weight
    of any edge $\{x_{\highl}^i,v^i_{h}\}$ is $h+2^{2i}\cdot
    i\cdot n$ and that the radius of $C^i_{\highl}$ is $r^{i}_{\highl}=h_i+2^{2i}\cdot
    i\cdot n$.
  \item $\{v_{h_i+1}^i,\dotsc, v_{n}^i\} \subseteq C^i_{\lowl}$ because
    the weight
    of any edge $\{x_{\lowl}^i,v^i_{h}\}$ is $n-h+1+2^{2i+1}\cdot
    i\cdot n$  and that the radius of $C^i_{\lowl}$ is $r^{i}_{\lowl}=n-h_i+2^{2i+1}\cdot
    i\cdot n$.
  \end{itemize}
  It remains to show that the vertices in $\overline{E}_{i,j}$ are covered by
  $\CCC$. Towards showing this, we observe the following.

  \begin{observation}\label{obs:non-neighbor-exclusion}
    Let $i\in[k]$ and let $v^i_{h}\in V_i$. Let $C^i_{\highl}$ and
    $C^i_{\lowl}$ be the clusters with centers $x^i_{\highl}$ and
    $x^i_{\lowl}$ and radii $r^i_{\highl} = h + 2^{2i}\cdot i\cdot n$
    and $r^i_{\lowl} = n-h + 2^{2i+1}\cdot i\cdot n$, respectively.
    Then, for every $j \in [k]\setminus \{i\}$, it holds that $(C^i_{\highl}\cup C^i_{\lowl})\cap \overline{E}_{i,j}=\overline{E}_{i,j}\setminus
    \SB w_{\{v_h^i,u\}} \in \overline{E}_{i,j} \SM u \in V_j\SE$,
    i.e., $C^i_{\highl}\cup C^i_{\lowl}$ cover all vertices in
    $\overline{E}_{i,j}$ apart from the ones that correspond to non-edges
    incident to $v_h^i$. 
  \end{observation}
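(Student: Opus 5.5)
The plan is to compute, for each $w_e \in \overline{E}_{i,j}$, its exact distances to the two anchors $x^i_{\highl}$ and $x^i_{\lowl}$, and then compare these distances with the radii $r^i_{\highl}$ and $r^i_{\lowl}$.

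Fix $j \neq i$ and a vertex $w_e \in \overline{E}_{i,j}$. By construction $e=\{v^i_{h'},u\}$ for some $h' \in [n]$ and some $u \in V_j$, so every vertex of $\overline{E}_{i,j}$ has exactly one endpoint in $V_i$. The graph $G'$ contains the edge $\{x^i_{\highl},w_e\}$ of weight $h'+1+\omega_{\highl}(i)$ and the edge $\{x^i_{\lowl},w_e\}$ of weight $n-h'+1+\omega_{\lowl}(i)$. Each single edge gives an upper bound on the corresponding distance, and \Cref{obs:edge-dist} gives the matching lower bound. Hence
\[
d(x^i_{\highl},w_e)=h'+1+\omega_{\highl}(i)
\qquad\text{and}\qquad
d(x^i_{\lowl},w_e)=n-h'+1+\omega_{\lowl}(i).
\]

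Comparing with the radii is then direct:
\begin{itemize}
\item $w_e \in C^i_{\highl}$ if and only if $h'+1 \le h$, that is, $h' < h$;
\item $w_e \in C^i_{\lowl}$ if and only if $n-h'+1 \le n-h$, that is, $h' > h$.
\end{itemize}
Consequently, $w_e \in C^i_{\highl}\cup C^i_{\lowl}$ if and only if $h' \neq h$. Equivalently, $w_e$ is left uncovered by these two clusters exactly when $e$ is incident to $v^i_h$, i.e.\ when $w_e = w_{\{v^i_h,u\}}$ for some $u \in V_j$. Both inclusions of the claimed set equality follow from this.

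The only delicate step is the exact distance computation, since a cheaper detour through other anchors could in principle shrink a distance. This is handled entirely by \Cref{obs:edge-dist}, which shows that every edge of $G'$ realises the distance between its endpoints. After that, the argument is the off-by-one bookkeeping of the $+1$ terms, which separates the prefix $h'<h$ from the suffix $h'>h$ and leaves out exactly $h'=h$.
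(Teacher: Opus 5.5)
Your proposal is correct and follows the paper's proof essentially step for step. Both arguments identify $d(x^i_{\highl},w_e)$ and $d(x^i_{\lowl},w_e)$ with the weights of the direct edges, using \Cref{obs:edge-dist} for the lower bound, and then compare these with the radii to get $h'<h$ for $C^i_{\highl}$ and $h'>h$ for $C^i_{\lowl}$. Your version only makes the upper bound from the direct edge explicit, where the paper leaves it implicit.
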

  \begin{proof}
    Consider a vertex $w_e \in \overline{E}_{i,j}$ with $e=\{v_\ell^i,u\}$ for
    some $\ell \in [n]$ and $u\in V_j$. Then, the edge $\{\high,w_e\}$ has weight
    $\ell+1+2^{2i}\cdot i\cdot n$ and because $r^i_{\highl}=h +
    2^{2i}\cdot i\cdot n$, it follows from \Cref{obs:edge-dist} that
    $w_e \in C^i_{\highl}$ if and only if $h\geq \ell+1$. Therefore,
    $C^i_{\highl}\cap \overline{E}_{i,j}=\SB w_{\{v_\ell^i,u\}}\in \overline{E}_{i,j}\SM
    \ell <h \land u \in V_j\SE$. Similarly, the edge $\{\low,w_e\}$ has weight
    $n-\ell+1+2^{2i+1}\cdot i\cdot n$ and because $r^i_{\lowl}=n-h +
    2^{2i+1}\cdot i\cdot n$, it follows from \Cref{obs:edge-dist} that
    $w_e \in C^i_{\lowl}$ if and only if $n-h\geq n-\ell+1$, i.e., $h\leq \ell-1$. Therefore,
    $C^i_{\lowl}\cap \overline{E}_{i,j}=\SB w_{\{v_\ell^i,u\}}\in \overline{E}_{i,j}\SM
    \ell > h \land u \in V_j\SE$. %This concludes the proof of the observation.
  \end{proof}
  Now consider the vertices in $\overline{E}_{i,j}$. By
  \Cref{obs:non-neighbor-exclusion}, $C^i_{\highl}\cup C^i_{\lowl}$
  cover all vertices in $\overline{E}_{i,j}$ apart from the vertices
  corresponding to non-edges incident to $v_{h_i}^i$. Similarly, 
  $C^j_{\highl}\cup C^j_{\lowl}$
  cover all vertices in $\overline{E}_{i,j}$ apart from the vertices
  corresponding to non-edges incident to $v_{h_j}^j$. Because $S$ is a multicolored
  $k$-clique in $G$, it holds that $v_{h_i}^i$ and $v_{h_j}^j$ are
  adjacent in $G$ and therefore there is no non-edge in $G$ that is
  both incident to $v_{h_i}^i$ and $v_{h_j}^j$ and hence
  $\overline{E}_{i,j} \subseteq C^i_{\highl}\cup C^i_{\lowl}\cup C^j_{\highl}\cup
  C^j_{\lowl}$.

  For the reverse direction, let $\CCC$ be a solution (clustering) for
  $(X,d,k',\Delta)$. We start by showing that $\CCC$ has $k'=2k$ clusters
  and those use the vertices $x_{\highl}^1,x_{\lowl}^1,\dotsc,x_{\highl}^k,x_{\lowl}^k$ as centers with certain radii.

\begin{claim}\label{clm:MainClm}
Let $\CCC$ be a solution for $(X,d,k',\Delta)$. 
Then $\CCC$ consists of exactly $k'$ clusters with centers 
$\{x_{\highl}^1, x_{\lowl}^1, \dotsc, x_{\highl}^k, x_{\lowl}^k\}$ and no other clusters. 
Moreover, for every $i \in [k]$ there is an $h_i \in [n]$ such that the cluster centered at $x_{\highl}^i$ has radius 
$r_{\highl}^i \ge h_i + \wfh(i)$, and the cluster centered at $x_{\lowl}^i$ has radius 
$r_{\lowl}^i \ge n - h_i + \wfl(i)$ (recall that $\wfh(i) = 2^{2i} \cdot i \cdot n$ and 
$\wfl(i) = 2^{2i+1} \cdot i \cdot n$).
\end{claim}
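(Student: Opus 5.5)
The plan is to mirror the proof of \autoref{MainClm}. Write $B=\Delta-nk=\sum_{i=1}^k 3\wfh(i)$, so the slack above the ``base'' cost $B$ is only $nk$. Every weight $\wfh(i),\wfl(i)$ is a multiple of $n$ times $2^{2i}$, and $\wfh(i)$, $\wfl(i)$ dominate the sum of all lower-index base weights plus $nk$. Indeed,
\[
\sum_{j<i} 3\wfh(j) + nk \le 3(i-1)n\cdot \tfrac{4^{i}}{3} + nk < 2^{2i}\cdot i\cdot n = \wfh(i)
\]
for $i\ge 2$ (and $k\le$ roughly $n$, or one adjusts slightly). This domination inequality is the quantitative backbone of the argument.

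The first step shows that every anchor is a center with radius at least its base weight. I would process the indices top-down: let $i$ be the largest index, and $\sigma\in\{\highl,\lowl\}$, such that $x^i_\sigma$ is not a center of radius at least $\omega_\sigma(i)$. Since $|W^i_\sigma|=2k+1>k'$, some leaf $w\in W^i_\sigma$ is not a center. Its only neighbour is $x^i_\sigma$ at weight $\omega_\sigma(i)$, so the cluster covering $w$ has radius at least $\omega_\sigma(i)$, plus the distance from its center to $x^i_\sigma$. All higher anchors are already charged their base weights. If that cluster is centered at a higher anchor $x^j_\tau$ with $j>i$, reaching $w$ costs at least $\omega_\tau(j)+\omega_\sigma(i)$, since any path must pass through $x^i_\sigma$ and through some intermediate vertex at distance at least $\omega_\tau(j)$ from $x^j_\tau$. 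That extra $\omega_\sigma(i)$ is charged on top. Otherwise the cluster is new, and its radius of at least $\omega_\sigma(i)$ is charged as well. In either case we would like $\omega_\sigma(i)$ charged disjointly.

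The subtle sub-case is the lower indices, where $x^{i}_{\highl}$ and $x^i_{\lowl}$ could share one cluster. This is handled exactly as Case 2 of \autoref{MainClm}: a non-anchor center pays an extra edge weight of at least $\omega(i)$-scale, which again exceeds the slack. Summing, if some anchor is missing, the cost exceeds $\Delta$, using the domination inequality above. Because all $2k$ anchors must then be centers and $k'=2k$, there are no other clusters.

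The second step establishes the $h_i$. Now every cluster is centered at an anchor with radius at least its base weight, so the total excess $\sum_i\big((r^i_{\highl}-\wfh(i))+(r^i_{\lowl}-\wfl(i))\big)$ is at most $nk$. A vertex $v^i_h$ covered by a cluster of class $j\ne i$ requires a path through $x^i_{\highl}$ or $x^i_{\lowl}$, costing an excess of at least $\wfh(i)>nk$, which is impossible. So $V_i$ is covered by the two class-$i$ anchors. By \Cref{obs:edge-dist} the distances are exactly $h+\wfh(i)$ and $n-h+1+\wfl(i)$. Letting $h_i$ be the largest $h$ with $v^i_h$ covered by $x^i_{\highl}$ (or $0$; in that case $v^i_1$ forces $r^i_{\lowl}\ge n+\wfl(i)$, and I would set $h_i=1$ or treat it as a boundary case), all $v^i_{h}$ with $h>h_i$ must be covered by $x^i_{\lowl}$. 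This gives $r^i_{\lowl}\ge n-h_i+\wfl(i)$ and $r^i_{\highl}\ge h_i+\wfh(i)$.

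I expect the main obstacle to be the first step's bookkeeping, in particular ruling out that a single cluster of large radius serves both anchors of a class, or serves leaves of two classes, within slack $nk$. This is where the factor $i$ and the $2^{2i}$ scaling must be used precisely.
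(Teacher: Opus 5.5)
Your overall plan matches the paper's: a top-down pass over the classes, leaves forcing the anchors to be centers, then a prefix/suffix split of $V_i$. There are, however, two genuine problems.

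\textbf{The boundary case you defer cannot be handled, because the statement is false there.} For every $i\in[k]$, take the cluster at $x^i_{+}$ with radius $\omega_{+}(i)$ and the cluster at $x^i_{-}$ with radius $n+\omega_{-}(i)$. These are $2k=k'$ clusters of total cost exactly $nk+\sum_{i}(\omega_{+}(i)+\omega_{-}(i))=\Delta$. They cover everything:
\begin{itemize}
\item $W^i_{+}$ and $W^i_{-}$;
\item every $v^i_h$, since $d(x^i_{-},v^i_h)\le n-h+1+\omega_{-}(i)\le n+\omega_{-}(i)$ for $h\ge 1$;
\item every $w_e$ with $e=\{v^i_h,u\}$, whose edge to $x^i_{-}$ has the same weight $n-h+1+\omega_{-}(i)$.
\end{itemize}
Yet $r^i_{+}=\omega_{+}(i)$, so no $h_i\in[n]$ satisfies $r^i_{+}\ge h_i+\omega_{+}(i)$. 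Your fallback ``set $h_i=1$'' is precisely what fails.

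What your second step (and the paper's argument) can actually establish is the claim with $h_i\in\{0,1,\dots,n\}$. That version does not help. By \Cref{obs:non-neighbor-exclusion} with $h=0$, the class-$i$ anchors then cover all of $\overline{E}_{i,j}$. So the clustering above is a solution for every input graph $G$, and the reverse direction of the reduction fails. The paper's proof has the same hole. In Case~3 it infers from the failure of the claim that $r^\ell_{+}+r^\ell_{-}<\omega_{+}(\ell)+\omega_{-}(\ell)+n$, which is false for this configuration. Its own range $h\in\{0\}\cup[n-1]$ also tacitly admits $h=0$. The construction itself has to exclude $h_i=0$. For instance, giving the edges between $x^i_{+}$ and $W^i_{+}$ weight $1+\omega_{+}(i)$ forces $r^i_{+}\ge 1+\omega_{+}(i)$, hence $h_i\ge 1$, without changing $\Delta$ or the forward direction.

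\textbf{Your domination inequality is false, so the charging in both of your steps fails for small indices.} Exactly, $\sum_{j<i}3\omega_{+}(j)+nk<\omega_{+}(i)$ holds iff $k<(4^{i+1}-4)/3$. For $i=1$ it needs $k<4$, and for $i=2$ it needs $k<20$. So the obstruction is $k$ against $4^{i}$, not $k$ against $n$. For $k\ge 8$ the slack $nk$ is at least both $\omega_{+}(1)$ and $\omega_{-}(1)$. Charging a single extra $\omega_\sigma(i)$ (your first step) or $\omega_{+}(i)$ (your second step) against a total slack of $nk$ then gives no contradiction.

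The paper avoids this with one top-down induction on the full statement, treating centers and the radius split together. When index $\ell$ is treated, every class $j>\ell$ is already known to pay $n$ beyond its base weights. Only $n\ell$ plus the unused base weights of the classes below $\ell$ then remain as slack. The factor $\ell$ in $\omega_{+}(\ell)=2^{2\ell}\ell n$ and $\omega_{-}(\ell)=2\omega_{+}(\ell)$ is designed to beat exactly that. The paper checks $n(\ell+1)+\omega_{+}(\ell)+\sum_{j<\ell}(\omega_{+}(j)+\omega_{-}(j))<\omega_{-}(\ell)$ and $2\omega_{+}(\ell)>n\ell$. You need to strengthen your induction hypothesis in the same way, which amounts to merging your two steps.
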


\begin{proof}
    Let $\ell \in [k]$ be the largest index such that there is no $h
    \in [n]$ such that $\CCC$ has clusters centered either at
    $x^\ell_{\highl}$ with radius
    $r^\ell_{\lowl}\geq h+\wfh(\ell)$ or at $x^\ell_{\lowl}$ with radius $r^{\ell}_{\lowl}\geq
    n-h+\wfl(\ell)$, respectively. Note that the statement of the claim
    now already applies for every $\ell'\in [k]$ with $\ell'>\ell$. We distinguish the
    following cases.

    \noindent\emph{\textbf{Case 1: $\CCC$ has no cluster with center
      $x^\ell_{\lowl}$.}}
    Because $|W^\ell_{\lowl}|>k'$, there is a vertex, say $w \in
    W^\ell_{\lowl}$, that is not the center of any cluster in
    $\CCC$. Let $C$ be the cluster in $\CCC$ that contains $w$. Then, as $w$ is not the center in $\CCC$ and has $x^\ell_{\lowl}$ as its unique neighbor,
    $C$ must also contain $x^\ell_{\lowl}$, but since $x^\ell_{\lowl}$
    is not the center of $C$ and every edge incident to
    $x^\ell_{\lowl}$ has cost at least $\wfl(\ell)$, we obtain that $C$
    has radius at least $2\wfl(\ell)$. 
    Because the statement of the claim holds for every $i>\ell$, we obtain that the clusters with centers $x^i_{\highl}$ and $x^i_{\lowl}$ exists for every $i$ where $\ell<i\leq k$ and have total cost at least
    $(n(k-\ell)+\sum_{i>\ell}
    \wfh(i)+\wfl(i))$. Moreover, even if $C$ is one of these clusters, it is easy to see that in order to cover $w$ the radius of such a cluster would have to be increased by at least $2\wfl(\ell)-n$ above the lower bound stated in the claim. Therefore, the 
    the cost of $\CCC$ is
    at least $(n(k-\ell-1)+\sum_{i>\ell}
    \wfh(i)+\wfl(i))+2\wfl(\ell)$. 
    Because
    $\cost=nk+\sum_{i=1}^k(\wfh(i)+\wfl(i))$,
    it suffices to show that $\cost=nk+\sum_{i=1}^k(\wfh(i)+\wfl(i))<(n(k-\ell-1)+\sum_{i>\ell}
    \wfh(i)+\wfl(i))+2\wfl(\ell)$, which is equivalent to showing that
    $n(\ell+1)+\sum_{i=1}^{\ell-1}(\wfh(i)+\wfl(i))+\wfh(\ell)<\wfl(\ell)$,
    which is shown as follows. We start by  expanding the left hand term

   % \begin{align*}
    %  & n(\ell+1)+\wfh(\ell)+\sum_{i=1}^{\ell-1}(\wfh(i)+\wfl(i))\\
     % = & n(\ell+1)+2^{2\ell}\ell n+\sum_{i=1}^{\ell-1}(2^{2i}i n+2^{2i+1}i n)\\
      %\leq & n(\ell+1)+2^{2\ell}\ell n+n(\ell-1)\sum_{i=2}^{2\ell-1}(2^{i})\\
      %\leq & n(\ell+1)+2^{2\ell}\ell
       %      n+n(\ell-1)(\sum_{i=0}^{2\ell-1}(2^{i})-3)\\
      %< & n(\ell+1)+2^{2\ell}\ell n+n(\ell-1)(2^{2\ell}-3)\\
     % = & n(\ell+1)+2^{2\ell+1}\ell n-2^{2\ell}n-3n(\ell-1)\\
      %= & n(\ell+1)+2^{2\ell+1}\ell n-3n\ell+3n-2^{2\ell}n\\
      %\leq & n(\ell+1)+2^{2\ell+1}\ell n-3n\ell-n\\
      %< & 2^{2\ell+1}(\ell+1) n\tag{\textcolor{cyan}{WHY WE HAVE ($\ell+1$)? looks like a typo!}}\\
      %= & \wfl(\ell)
    %\end{align*}

    \begin{align*}
      & n(\ell+1)+\wfh(\ell)+\sum_{i=1}^{\ell-1}(\wfh(i)+\wfl(i))\\
      = & n(\ell+1)+2^{2\ell}\cdot \ell \cdot n+\sum_{i=1}^{\ell-1}(2^{2i}\cdot i \cdot n+2^{2i+1}\cdot i \cdot n)\\
      = & n(\ell+1)+2^{2\ell}\cdot \ell\cdot n+\sum_{i=1}^{\ell-1}(3\cdot 2^{2i}\cdot i\cdot n)\\
      \leq & n(\ell+1)+2^{2\ell}\cdot\ell\cdot n+3n(\ell-1)\sum_{i=1}^{\ell-1}( 2^{i})\tag{by replacing $i$ with its maximum value}\\
= & n(\ell+1)+2^{2\ell}\cdot\ell\cdot n+3n(\ell-1)\big(\frac{2^{2\ell}-4}{3}\big)\tag{by putting the value of the term $\sum_{i=1}^{\ell-1}( 2^{i})$}\\
= & n(\ell+1)+2^{2\ell}\cdot\ell\cdot n+n(\ell-1)(2^{2\ell}-4)\\
= & n(\ell+1)+2^{2\ell}\cdot\ell\cdot n+2^{2\ell}\cdot \ell\cdot n -2^{2\ell}\cdot n-4\cdot \ell \cdot n+4n\\
= & n(\ell+1)+2^{2\ell+1}\cdot\ell\cdot n -2^{2\ell}\cdot n-4\cdot \ell \cdot n+4n\\
= & 2^{2\ell+1}\cdot\ell\cdot n -2^{2\ell}\cdot n-3\cdot \ell \cdot n+5n\\
= & \wfl(\ell) -2^{2\ell}\cdot n-3\cdot \ell \cdot n+5n\\
\leq  & \wfl(\ell) -2^{2\ell}\cdot n+2n\tag{as $ -3\cdot \ell\cdot n+5n \leq 2n$ for all $\ell\geq 1$}\\
< & \wfl(\ell)
\end{align*}
where the last inequality follows as $-2^{2\ell}\cdot n+2n<0$ for all $\ell\geq 1$.
    
    \noindent\emph{\textbf{Case 2: $\CCC$ has no cluster with center
      $x^\ell_{\highl}$, but $\CCC$ has a cluster with center $x^\ell_{\lowl}$.}}
    Because $|W^\ell_{\highl}|>k'$, there is a vertex, say $w \in
    W^\ell_{\highl}$, that is not the center of any cluster in
    $\CCC$. Let $C$ be the cluster in $\CCC$ that contains $w$. Then,
    $C$ must also contain $x^\ell_{\highl}$, but since $x^\ell_{\highl}$
    is not the center of $C$ and every edge incident to
    $x^\ell_{\highl}$ has cost at least $\wfh(i)$, we obtain that $C$
    has radius at least $2\wfh(i)$. Moreover, the radius of the
    cluster, say $C'\in \CCC$, with center $x^i_{\lowl}$ is at least $\wfl(i)$, since
    otherwise $C'$ only contains $x^i_{\lowl}$ and at least one of the
    clusters containing a vertex from $W^i_{\lowl}$ must already
    contain $x^i_{\lowl}$ (because $|W^i_{\lowl}|>k'$).
    Therefore, the cost of $\CCC$ is
    at least $(n(k-\ell)+\sum_{\ell'>\ell}
    \wfh(i)+\wfl(i))+2\wfh(i)+\wfl(i)=(n(k-\ell)+\sum_{\ell'>\ell}
    \wfh(i)+\wfl(i))+2\wfl(i)$ and therefore the same lower bound on
    the cost obtained in \textbf{Case 1}. Therefore, using the same
    calculation as in \textbf{Case 1}, we obtain that this cost
    lower bound is already larger than $\cost$.

    \noindent\emph{\textbf{Case 3: $\CCC$ has clusters with centers
      $x^\ell_{\highl}$ and $x^\ell_{\lowl}$.}}
    Note that because of \textbf{Case 1} and \textbf{Case 2}, we can
    from now on assume that $\CCC$ has a cluster with center $c$ for
    every $c \in \{x_{\highl}^1,x_{\lowl}^1, \dotsc,
    x_{\highl}^k,x_{\lowl}^k\}$ and because $|\{x_{\highl}^1,x_{\lowl}^1, \dotsc,
    x_{\highl}^k,x_{\lowl}^k\}|=k'$, $\CCC$ does not have any
    other clusters. Moreover, if $C \in \CCC$ is a cluster with center
    $x^i_{\highl}$ (or $x^i_{\lowl}$), then its radius $r^i_{\highl}$
    ($r^i_{\lowl}$) must be at least $\wfh(i)$ ($\wfl(i)$). This is
    because otherwise $C$ only contains $x^i_{\highl}$ ($x^i_{\lowl}$)
    but because $|W^i_{\highl}|>k'$ ($|W^i_{\lowl}|>k'$), there must be
    another cluster containing some $w \in W^i_{\highl}$ ($w \in W^i_{\lowl}$) and
    $x^i_{\highl}$ ($x^i_{\lowl}$) and therefore $C$ would not be
    needed. It follows that the cost of the clustering is already at
    least $n(k-\ell)+\sum_{i=1}^k(\wfh(i)+\wfl(i))$, leaving
    only $n\ell$ for additional costs.

    Because the conditions of the claim are not satisfied for $\ell$,
    it must now holds that
    $r^\ell_{\highl}+r^\ell_{\lowl}<\wfh(\ell)+\wfl(\ell)+n$. But then, there must
    be some $h \in \{0\}\cup [n-1]$ such that $r^\ell_{\highl}=\wfh(\ell)+h$ and
    $r^\ell_{\lowl}<n-h+\wfl(\ell)$, which implies that the vertex
    $v^i_{h+1}$ is not covered by $C_{\highl}^\ell$ or
    $C_{\lowl}^\ell$. But then, $v^i_{h+1}$ must be contained in some
    other cluster, say $C$, with center $c \in \{x_{\highl}^1,x_{\lowl}^1, \dotsc,
    x_{\highl}^k,x_{\lowl}^k\}\setminus \{x^\ell_{\highl},x^\ell_{\lowl}\}$.
    Assume that $c=x^i_{\lambda}$ for some $i \in
    [k]\setminus\{\ell\}$ and $\lambda \in \{\highl,\lowl\}$. Then, in
    order to cover $v^i_{h+1}$, the radius of $C$ must be at least by
    $2\wfh(\ell)$ larger than the lower bound that we computed for $C$
    above, i.e., the cost of the clustering is now at least
    $n(k-\ell)+\sum_{i=1}^k(\wfh(i)+\wfl(i))+2\wfh(\ell)$.
    Since $2\wfh(\ell)=2\cdot2^{2\ell}\cdot\ell \cdot n>n\ell$, this exceeds the
    allowed cost of $\cost=nk+\sum_{i=1}^\ell(\wfh(i)+\wfl(i))$ and
    concludes the proof of the claim.
  \end{proof}

  Because of \autoref{clm:MainClm}, we obtain that $\CCC$ contains
  exactly one cluster $C_{c}$ with center $c$ for every $c \in \{x_{\highl}^1,x_{\lowl}^1, \dotsc,
  x_{\highl}^k,x_{\lowl}^k\}$ and no other clusters. Moreover, for every $i \in [k]$,
  there is an $h_i \in [n]$ such that the cluster in $\CCC$ with
  center at $x_{\highl}^i$ or $x_{\lowl}^i$ has radius
  $r^i_{\highl}\geq h_{i}+\wfh(i)$ or $r^i_{\lowl}\geq
  n-h_{i}+\wfl(i)$, respectively. 
  Since the total cost is at most $\Delta$, these bounds are tight, and hence $r^i_{\highl}= h_{i}+\wfh(i)$ and $r^i_{\lowl}=
  n-h_{i}+\wfl(i)$ for all $i \in [k]$.
  We now claim that the vertex set
  $S=\{v^1_{h_1},\dotsc,v^k_{h_k}\}$ is a $k$-clique of $G$. In other
  words, it is sufficient to show that $\{v^i_{h_i},v^j_{h_j}\} \in
  E(G)$ for every $1 \leq i < j \leq k$. Because of
  \Cref{obs:non-neighbor-exclusion}, we obtain that for every $\ell
  \in \{i,j\}$, it holds that
  $C_{x^\ell_{\highl}}\cup C_{x^\ell_{\lowl}}$ cover exactly all vertices in
  $\overline{E}_{i,j}$ that do not correspond to non-edges incident to
  $v^\ell_{h_\ell}$. Therefore, $C_{x^i_{\highl}}\cup
  C_{x^i_{\lowl}}\cup C_{x^j_{\highl}}\cup
  C_{x^j_{\lowl}}$ cover all vertices in $\overline{E}_{i,j}$ if and only if $G$
  does not contain a non-edge between $v^i_{h_i}$ and $v^j_{h_j}$,
  i.e., if and only if $\{v^i_{h_i},v^j_{h_j}\} \in E(G)$. In the following, we observer that every vertex in $\overline{E}_{i,j}$ must be
  covered by one of the clusters $C_{x^i_{\highl}}$,
  $C_{x^i_{\lowl}}$, $C_{x^j_{\highl}}$, $C_{x^j_{\lowl}}$. Suppose this is not the case, i.e., there exists a vertex $w_e\in E_{i,j}$ that is not covered by one of the clusters $C_{x^i_{\highl}}$,
  $C_{x^i_{\lowl}}$, $C_{x^j_{\highl}}$, $C_{x^j_{\lowl}}$. As $\mathcal{C}$ is a valid clustering of $(X,d,k,\Delta)$ and hence $w_e$ must be covered by some other cluster centered at some $x^{t}_{\highl}$ (or at $x^{t}_{\lowl}$) where $t\in [k]\neq i,j$, which by construction must have radius more than $r^t_+=2^{2t}\cdot t\cdot n+h_{t'}$ (or more than $r^t_-=2^{2t+1}\cdot t\cdot n+n-h_{t'}$) for some $h_{t'}\in V_{t}$, a contradiction to \autoref{clm:MainClm}.
  This
  concludes the reverse direction of the proof.
\end{proof}
It is interesting to note that the result of \Cref{thm:vc+k} even
holds for the variant of \MSR, where the set of centers is given and
one merely needs to find the right radius for every center.

\subsection{Completing the Picture for the Weighted Case}%Employing Structural Parameters for Dense Graphs}
\label{ssec:wcomplete}
  
Here, we want to complete the picture for the parameterized complexity
of \MSR\ on metrics induced by  weighted graphs
w.r.t. structural parameters. As we have shown in~\Cref{thm:vc+k}, \MSR\ is
\W[1]-hard parameterized already by the vertex cover number, which is
arguably one of the most restrictive parameters in the context of
sparse graphs. Clearly, this already excludes fixed-parameter
algorithms for \MSR\ on metrics induced by weighted 
graphs parameterized by feedback vertex set number, treedepth,
treewidth, cliquewidth etc. However, it does not exclude
fixed-parameter algorithms for other structural parameters defined for
dense graphs such as neighborhood diversity, modular width, and
modular treedepth. The following result excludes such algorithms by
showing that \MSR\ is already \NP-hard on complete graphs and complete
bipartite graphs, i.e., graphs where all of these parameters are
constant.
Furthermore, it even excludes fixed-parameter
tractability when combining any of those structural parameters with
the parameter $k+\cost$.
\begin{theorem}\label{thm:MSR-complete}
  The \MSR\ problem is \textup{\NP\text{-hard}} and \textup{\W[1]\text{-hard}}
  parameterized by $k+\cost$ even on metrics
  induced by  weighted complete graphs and complete
  bipartite graphs.
\end{theorem}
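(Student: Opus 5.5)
The plan is to reuse the instance built in the proof of \autoref{thm:MT} and to make its metric realizable on a complete graph and on a complete bipartite graph. Concretely, I would add every missing edge (for the complete graph), or every missing edge between the two sides (for the complete bipartite graph), and give each new edge a large weight $M$. A convenient choice is $M=\cost+1=2^{k+1}-1$.

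The first step is to check that the new graph is a valid host. The reduction graph $G'$ is connected; if needed, connectivity can be ensured by adding a single extra point, or we may assume $G$ is connected. Its bipartition is $\bigcup_i V_i$ versus the remaining vertices. Adding the heavy edges between the two sides therefore yields a complete bipartite graph. Adding all heavy edges yields a complete graph.

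The second step is to show that the new metric $d'$ satisfies $d'(u,v)=\min\{d(u,v),M\}$, where $d$ is the metric of $G'$. Only distances that were already larger than $M$ can change, so every distance of at most $\cost$ is preserved exactly. Note that in $G'$ every distance is bounded by a polynomial-size sum of the weights, so the instance size stays polynomial.

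The third step is to re-run the correctness argument of \autoref{thm:MT} unchanged.
\begin{itemize}
\item In any solution of cost at most $\cost$, every cluster radius is at most $\cost<M$.
\item Hence a cluster contains exactly the same points under $d'$ as under $d$, since $d'(c,x)\le r\le\cost$ holds iff $d(c,x)\le r$.
\end{itemize}
So the set of feasible clusterings of cost at most $\cost$ is identical for the two instances. Both directions of the equivalence, including \autoref{MainClm} and the clique claim, therefore transfer verbatim. The parameters $k$ and $\cost=2^{k+1}-2$ are unchanged, giving \W[1]-hardness by $k+\cost$.

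The fourth step handles \NP-hardness. The \W[1]-hardness reduction is polynomial, and \mcc\ is \NP-hard when $k$ is part of the input. The weights $2^i$ have $\bigoh(k)$ bits, so they remain polynomially encodable. Thus the same reduction yields \NP-hardness.

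The main obstacle is verifying that truncation at $M$ does not shorten any relevant path. A shortest path using a heavy edge has length at least $M>\cost$, so no distance of at most $\cost$ can be created or reduced. This is immediate because all weights are positive. I would also double-check that the ``$\infty$''-free requirement and the metric axioms hold, which they do since $d'$ is a shortest-path metric.
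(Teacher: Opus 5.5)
Your proposal is correct and follows the same route as the paper, which also adds all missing edges (for the complete graph), respectively all missing cross-side edges (for the complete bipartite graph), with weight $\cost+1$ and simply declares the equivalence straightforward; your observation that every radius in a solution is at most $\cost<\cost+1$, so every ball of radius at most $\cost$ is identical under $d$ and $d'$, is precisely the omitted detail. One small inaccuracy: the identity $d'(u,v)=\min\{d(u,v),M\}$ holds for the complete graph but can fail for same-side pairs in the complete bipartite case (there $d'$ may exceed $M$); your argument only uses the correct weaker fact that distances at most $\cost$ are preserved and all other distances stay above $\cost$, so nothing breaks.
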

\begin{proof}
  It follows from \Cref{thm:MT} that \MSR\ is
  \textup{\NP\text{-hard}} and \textup{\W[1]\text{-hard}}
  parameterized by $k+\cost$ even on metrics
  induced by  weighted bipartite graphs and we will
  show the theorem by providing a parameterized and polynomial-time reduction from this case to the two cases
  stated in the theorem. Let $(X,d,k,\cost)$ be an instance of the
  \MSR\ problem such that $X$ is the set of vertices of the
   weighted bipartite graph $G$ with partition
  $\{A,B\}$ and $d$ is the metric
  induced by $G$. To reduce to the case of a complete graph, let
  $(X,d',k,\cost)$ be the instance of \MSR\ such that $d'$ is the
  metric induced by the graph $G'$ obtained from $G$
  after adding an edge of cost $\cost+1$ between any two vertices in
  $G$ that are not adjacent.
  Similarly, to reduce to the case of a
  complete bipartite graph, let
  $(X,d'',k,\cost)$ be the instance of \MSR\ such that $d''$ is the
  metric induced by the graph $G''$ obtained from $G$
  after adding an edge of cost $\cost+1$ between any vertex in
  $A$ and any vertex in $B$.
  
  It is straightforward to show that these two
  constructions provide parameterized and polynomial-time
  reductions as required.
\end{proof}

While~\Cref{thm:MSR-complete} excludes fixed-parameter tractability
for any dense structural parameter plus $k+\cost$, we will now show
that \MSR\ on metrics induced by  weighted graphs is
fixed-parameter tractable parameterized by the treewidth of the graph
plus the cost of the clustering. To do so, we need the following
definitions.

Let $G$ be an  weighted graph and let $v \in V(G)$.
We denote by $N_r[v]$ the \emph{$r$-neighborhood} of $v$, i.e., the all vertices
that have distance at most $r$ to $v$. Moreover, we denote by
$\#(v)$ the number of distinct sets in $\SB N_r[v] \SM r \in
\mathbb{R}\SE$ and by $\#(G)$ the maximum of $\#(v)$ over all vertices $v
\in V(G)$. Using these notions, fixed-parameter tractability for
treewidth plus cost now follows immediately from the algorithm used to show
\cite[Theorem 1]{DBLP:journals/corr/abs-2310-02130} by observing that
the number of vertices $|V|$ in their statement of the run-time of the
algorithm can be easily replaced by the more concise parameter $\#(G)$.
\begin{proposition}[{\cite[Theorem 1]{DBLP:journals/corr/abs-2310-02130}}]\label{pro:XP-treewidth}
  The \MSR\ problem on metrics induced by  weighted
  graphs can be solved in time $\bigoh(\omega 2^{3\omega} (\#(G))^{3\omega+1}k^3)$, where $\omega$ is the treewidth of the
  graph $G$ underlying the metric.
\end{proposition}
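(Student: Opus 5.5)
This proposition is essentially the algorithm of \cite[Theorem 1]{DBLP:journals/corr/abs-2310-02130}, so the plan is to re-examine their dynamic program over a nice tree decomposition of width $\omega$ and show that every factor of $|V|$ in its running time comes from enumerating candidate radii. Each such factor can then be replaced by $\#(G)$. The observation behind this is that a cluster is determined by its center $c$ and the set $N_r[c]$, not by the real value $r$. The cheapest radius realising a given neighborhood is the largest distance from $c$ to a vertex of that set, which is one of the breakpoints $\SB d(c,u) \SM u\in V\SE$. Hence for every $c$ we need only the $\#(c)\le \#(G)$ distinct neighborhoods, each represented by its minimal radius. Replacing an arbitrary radius by this canonical one never increases the cost and leaves the covered set unchanged, so an optimal solution using only canonical radii exists.

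The algorithm is organised as follows. First, compute all-pairs shortest paths in $G$ and, for each vertex $c$, the sorted list of its distinct breakpoints; this takes polynomial time. Second, recall the states of the dynamic program of Drexler et al. For each bag vertex $v$, a state records: (a) whether $v$ is a center and, if so, which radius it gets, chosen from the $\#(v)$ canonical radii; (b) whether $v$ is covered, and if so through which bag vertex the covering shortest path leaves the bag, together with the relevant radius or distance class; and (c) how many clusters have been used so far, at most $k$. The factors $|V|^{3\omega+1}$ in their bound come from the radius and distance guesses attached to bag vertices, so each is bounded by the number of distinct neighborhoods of the relevant center. Third, verify that the introduce, forget and join transitions only compare a guessed radius with distances $d(c,u)$. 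Such comparisons depend only on which neighborhood $N_r[c]$ is selected, so they are unaffected by restricting to canonical radii. Counting states per bag then gives $2^{\bigoh(\omega)}(\#(G))^{\bigoh(\omega)}$, and the joins contribute the $k^3$ and $\omega$ factors, which yields the claimed $\bigoh(\omega 2^{3\omega} (\#(G))^{3\omega+1}k^3)$.

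The main obstacle is the third step. If their program stores partial distance values, as opposed to radii, at separator vertices, those values are not obviously bounded by $\#(G)$. I would handle this by rewriting any such stored value as a ``remaining budget'' $r - d(c,x)$ for a center $c$ and separator vertex $x$. Whether a vertex $u$ on the other side is reached is the test $d(x,u)\le r-d(c,x)$. For the relevant paths, which run through $x$, we have $d(c,u)=d(c,x)+d(x,u)$, so this test is equivalent to $d(c,u)\le r$, i.e.\ membership of $u$ in $N_r[c]$. Therefore it suffices to index the budget by the canonical radius of $c$, giving at most $\#(c)$ options. I would check this carefully against their join step to make sure no extra $|V|$ factor remains hidden.
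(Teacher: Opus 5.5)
Your overall route is the paper's. The paper only remarks that in the running-time analysis of the dynamic program behind \cite[Theorem 1]{DBLP:journals/corr/abs-2310-02130} every factor $|V|$ can be replaced by $\#(G)$, and your canonical-radius observation (one radius per distinct neighborhood $N_r[c]$) is the reason this works for the radius guesses. The step that fails as written is your treatment of stored distance values. A state attached to a bag vertex $x$ does not record which center $c$ produced the budget $r-d(c,x)$; that center may already be forgotten or not yet introduced. So indexing the budget by the canonical radii of $c$ does not give $\#(c)$ options per state. The candidate values $d(c,w)-d(c,x)$ range over all pairs $(c,w)$, which can be up to $|V|\cdot\#(G)$ values. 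Storing $c$ explicitly instead costs a factor $|V|$. Either way the factor you want to remove reappears, exactly where you said you still had to check.

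The fix is to normalise with respect to $x$ rather than $c$. Every path between the two sides of the bag $B$ meets $B$, so for $c$ and $u$ on different sides, $u\in N_r[c]$ if and only if $d(x,u)\le r-d(c,x)$ for some $x\in B$. Hence at $x$ only the set $N_b[x]$ matters, where $b=\max_c(r_c-d(c,x))$ is the best budget over centers on the other side. You can round $b$ down to $\max\{d(x,w)\mid w\in V,\ d(x,w)\le b\}$, or to ``none'' if $b<0$, without changing $N_b[x]$. This leaves at most $\#(x)\le\#(G)$ values per bag vertex, independent of $c$. Demands that centers on the other side must meet at $x$ are rounded in the same way. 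This also makes your item (b) unnecessary. Recording through which bag vertex the covering path leaves would cost $\omega^{\Theta(\omega)}$, which is incompatible with the factor $2^{3\omega}$. With this per-vertex normalisation, the exact exponents still come only from substituting $\#(G)$ for $|V|$ in Drexler et al.'s own count. That substitution is all the paper does.
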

Since $\#(G)$ is at most the cost $\cost$ , we obtain the following corollary from the
above proposition.
\begin{corollary}\label{cor:tw+Delta}
  The \MSR\ problem on metrics induced by  weighted
  graphs is fixed-parameter tractable parameterized by treewidth plus
  $\cost$.
\end{corollary}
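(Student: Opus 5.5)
The plan is to apply \Cref{pro:XP-treewidth} directly, after a light preprocessing step and a bound on the quantity that plays the role of $\#(G)$. Every feasible solution uses only radii $r_i\le\cost$, since all radii are non-negative and sum to at most $\cost$. So we may run the algorithm of \Cref{pro:XP-treewidth} with the candidate radius set of each vertex $v$ restricted to those values $r\le\cost$ that give distinct neighbourhoods $N_r[v]$. The dynamic program of Drexler et al.\ only ever enumerates, for each bag vertex, a candidate ball (equivalently a candidate radius) around a potential center. Discarding candidate balls whose radius exceeds $\cost$ therefore loses no solution of cost at most $\cost$. The running time then depends on
\[
\#_{\cost}(G)=\max_{v\in V(G)}\bigl|\SB N_r[v] \SM 0\le r\le \cost\SE\bigr|
\]
in place of $\#(G)$.

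The second step is to bound $\#_{\cost}(G)\le \cost+1$. I will assume, as in all reductions of this section, that edge weights are positive integers; rational weights can be scaled to integers, which multiplies $\cost$ by the common denominator. Under this assumption every distance $d(u,v)$ is an integer. Hence $N_r[v]=N_{\lfloor r\rfloor}[v]$ for every $r$, and the only distinct neighbourhoods with $r\le\cost$ are $N_0[v],\dots,N_{\lfloor\cost\rfloor}[v]$. This gives at most $\cost+1$ of them. This is the sense in which ``$\#(G)$ is at most the cost $\cost$'' should be read.

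Third, we may assume $k\le n$, since more than $n$ clusters are never useful. Substituting into the bound of \Cref{pro:XP-treewidth} gives a running time of $\bigoh(\omega\,2^{3\omega}(\cost+1)^{3\omega+1}k^3)$, plus the polynomial time needed to compute all pairwise distances and a tree decomposition of width $\bigoh(\omega)$ in FPT time. This is of the form $f(\omega+\cost)\cdot n^{\bigoh(1)}$, as required.

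The main obstacle is the first step: justifying that the algorithm of \cite{DBLP:journals/corr/abs-2310-02130} really only enumerates radii per vertex, so that the factor $|V|$ (or $\#(G)$) in its running time comes solely from the choice of radius and can be replaced by the restricted count $\#_{\cost}(G)$. I would verify this by inspecting their DP states, which record for each bag vertex a center-plus-radius ``type''. The integrality assumption in the second step is also essential: with arbitrary real weights, the number of distinct balls of radius at most $\cost$ is not bounded by any function of $\cost$. The statement should therefore be read for integer (or uniformly bounded-precision) weights.
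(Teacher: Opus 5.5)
Your proposal is correct and takes the same route as the paper, which simply substitutes the bound $\#(G)\le\cost$ into the running time of \Cref{pro:XP-treewidth}. Your two additions make explicit what the paper's one-line justification leaves implicit: truncating the candidate radii at $\cost$, and assuming integer weights so that each vertex has at most $\cost+1$ relevant neighbourhoods. Both are needed, because the unrestricted $\#(G)$ need not be bounded by $\cost$.
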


\section{\MSR\ on metrics induced by (unweighted) graphs}

We now consider the case of (unweighted) graphs. As mentioned in the introduction, it is a
long standing open question whether \MSR\ on metrics induced by unweighted
graphs is solvable in polynomial-time~\cite{algorithmica/BehsazS15}. While we are not able to answer
this question, we provide the following main contributions. First, we
exhibit two natural variants of \MSR\ and provide strong hardness
results for these variants even on metrics induced by unweighted
 graphs. In particular, we introduce the
\exact\ version (where one is required to find a solution with exactly
$k$ non-zero clusters) in \Cref{ssec:exactnonzero} and show that it is
\NP-hard and \W[1]-hard parameterized by $k+\cost$. Moreover, we introduce the \fixed\
version (where one is additionally required to choose the clusters centers
from a given set of allowed centers) in \Cref{ssec:fixed} and show
that it is \W[1]-hard parameterized by either $k+\cost$ or $k+|A|+\fvs$,
where $A$ is the set of allowed centers and $\fvs$ is the feedback vertex
number of the underlying undirected graph. It is important to note that we are not able to show \NP-hardness for \fixed\ and we believe that our analysis for \fixed\ can be seen as a way around having to first settle the classical complexity of \MSR\ on graphs. That is, in order to obtain a parameterized complexity classification for \MSR\ it might be easier to show \W[1]-hardness as we do for \fixed. In addition to this, we also
show that \MSR\ as well as \exact\ and \fixed\ are fixed-parameter
tractable parameterized by the treedepth of the underlying graph.
We start by providing the algorithmic result and then introduce the
two variants and show their hardness in the following two subsections.

Let $G$ be an
undirected graph and let $\ell(G)$ denote the length of a longest path
in $G$. Then, $\#(G)\leq \ell(G)$.
It is well-known that the longest paths of an
undirected graph is asymptotically the same as the well-known
parameter treedepth and that the treewidth of a graph is at most equal
to its treedepth, i.e.:
\begin{proposition}[\cite{NesetrilMendez12}]
  Let $G$ be an undirected graph. Then, $\tw(G)\leq \td(G)$ and
  $\log \ell(G) \leq \td(G)\leq \ell(G)$.
\end{proposition}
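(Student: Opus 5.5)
We prove the three inequalities separately, using the standard definition: $\td(G)$ is the minimum height (counted in vertices) of a rooted forest $F$ on $V(G)$ such that every edge of $G$ joins an ancestor–descendant pair of $F$ (an \emph{elimination forest}). We may treat each connected component separately, since treewidth, treedepth and $\ell$ are all maxima over components. Throughout we measure $\ell(G)$ as the number of vertices on a longest path. This convention matters: with an edge count, a single edge gives $\ell=1$ but $\td=2$, so $\td\le \ell$ would fail. In the edge-count reading the statement holds up to an additive $1$, which suffices for the asymptotic use in the paper.

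\emph{$\tw(G)\le \td(G)$.} Take an elimination forest $F$ of height $d=\td(G)$. We may assume $F$ is a tree (add a root otherwise; for a connected $G$ it already is one). Use $F$ itself as the decomposition tree. The bag of a node $v$ is $B_v$, the set of ancestors of $v$ together with $v$, so $|B_v|\le d$. Every edge $\{u,w\}$ with $u$ an ancestor of $w$ lies in $B_w$. For a vertex $x$, the nodes whose bags contain $x$ are exactly the descendants of $x$, which form a subtree. Hence this is a tree decomposition of width at most $d-1\le d$.

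\emph{$\td(G)\le \ell(G)$.} Run a depth-first search from any vertex and let $T$ be the resulting DFS tree. The key fact is that a DFS tree of an undirected graph has no cross edges: every non-tree edge joins an ancestor–descendant pair. So $T$ is an elimination forest. Its height equals the number of vertices on some root-to-leaf path of $T$, which is a path in $G$. Therefore $\td(G)\le \ell(G)$.

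\emph{$\log \ell(G)\le \td(G)$.} Treedepth is monotone under taking subgraphs, because restricting an elimination forest to a vertex subset and contracting the removed vertices keeps ancestor relations. So it suffices to show $\td(P_m)\ge \log_2(m+1)$ for the path $P_m$ on $m$ vertices. We argue by induction on $m$. The root of an elimination tree for $P_m$ must be a vertex whose removal leaves the rest inside its subtrees. Removing any vertex leaves a component with at least $\lceil (m-1)/2\rceil$ vertices, so $\td(P_m)\ge 1+\td(P_{\lceil (m-1)/2\rceil})$, which gives $\td(P_m)\ge \lceil\log_2(m+1)\rceil$.

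The only step needing care is the no-cross-edge property of DFS together with the path-length convention; all other steps are routine.
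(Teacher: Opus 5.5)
The paper gives no proof of this proposition; it cites Nešetřil and Ossona de Mendez. Your write-up therefore supplies a self-contained argument where the paper has none.

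All three of your steps are correct, and they are the standard arguments behind the cited result:
\begin{itemize}
\item the ancestor-closed bags $B_v$ give a tree decomposition of width at most $\td(G)-1$;
\item a DFS tree is an elimination tree, because undirected DFS produces no cross edges;
\item the halving argument on paths, combined with monotonicity of treedepth under subgraphs, gives $\td(P_m)\ge\lceil\log_2(m+1)\rceil$.
\end{itemize}

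Your remark on conventions is correct and worth keeping. The paper's Preliminaries define the length of an unweighted path as its number of edges. Under that reading, $\td(G)\le\ell(G)$ is literally false. It already fails for $K_2$, and for every clique $K_n$, where $\td(K_n)=n$ but $\ell(K_n)=n-1$. The correct form is $\td(G)\le\ell(G)+1$.

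This slip does not affect the paper. It only uses $\tw(G)\le\td(G)$ and $\ell(G)\le 2^{\td(G)}$, and your argument establishes both in either convention.
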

Therefore, we obtain that $\#(G)\leq 2^\td(G)$ and $\tw(G)\leq
\td(G)$, which together with~\Cref{pro:XP-treewidth} implies (it is
straightforward to verify that the DP algorithm used to show
\Cref{pro:XP-treewidth} also works for \exact and \fixed):
\begin{corollary}\label{thm: treedepth}
  The \MSR\ problem (and the \exact\ and \fixed\ problems) on metrics
  induced by unweighted graphs is
  fixed-parameter tractable parameterized by the treedepth of the
  graph.
\end{corollary}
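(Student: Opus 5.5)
The plan is to plug the treedepth bounds into the run time of \Cref{pro:XP-treewidth} and then argue that the dynamic program behind it adapts to the two variants. For an unweighted connected graph $G$, every $r$-neighborhood $N_r[v]$ depends only on $\lfloor r\rfloor$. Moreover, $N_r[v]=V(G)$ as soon as $r$ reaches the eccentricity of $v$. That eccentricity is the length of a shortest path from $v$ to its farthest vertex, so it is at most $\ell(G)$. Hence $\#(v)\le \ell(G)+1$ for every $v$, and $\#(G)\le \ell(G)+1$. The paper states this as $\#(G)\le\ell(G)$; the additive constant is harmless. By the cited proposition, $\log \ell(G)\le \td(G)$, so $\#(G)\le 2^{\td(G)}+1$, and also $\omega=\tw(G)\le\td(G)$.

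Substituting these into $\bigoh(\omega 2^{3\omega}(\#(G))^{3\omega+1}k^3)$ gives the bound
\[
\bigoh\bigl(\td\cdot 2^{3\td}\cdot (2^{\td}+1)^{3\td+1}\cdot k^3\bigr)=2^{\bigoh(\td^2)}\cdot k^3.
\]
Since $k\le n$, this is an \FPT\ running time in $\td(G)$, and it settles the claim for \MSR.

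For \exact\ and \fixed, I would re-examine the tree-decomposition dynamic program of Drexler et al. As I understand it, each bag state records, for each bag vertex, a guess of the center-radius pair "responsible" for it, up to $\#(G)$ choices per radius. The table is indexed by the number of clusters used so far. For \fixed, I would restrict the admissible centers to $A$ when a center is introduced or guessed; this only filters states. For \exact, I would keep the count of clusters \emph{exactly} rather than as an upper bound, and forbid radius-$0$ clusters, which are exactly those not covering a second point. In the unweighted setting, any positive radius at least $1$ covers a neighbor, so the condition "at least two points" is just $r_i\ge 1$ when $G$ has at least two vertices.

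The main obstacle is checking that the exact-count bookkeeping is sound at join nodes. A cluster whose center is forgotten or lies outside a bag must be counted exactly once, and counts from the two subtrees must be combined by exact addition rather than by minimum. I would handle this by charging each cluster to the node where its center is forgotten. The join is then a convolution over counts in $[k]$, which adds only a polynomial factor in $k$. The state space remains bounded by $2^{\bigoh(\td^2)}\cdot\poly(k)$.
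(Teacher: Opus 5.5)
Your proposal is correct and is essentially the paper's argument. You bound $\#(G)$ by the longest-path length $\ell(G)$ (up to an additive constant, as you rightly note), use $\ell(G)\le 2^{\td(G)}$ and $\tw(G)\le\td(G)$, and substitute into the running time of \Cref{pro:XP-treewidth}. The paper only asserts that the dynamic program carries over to \exact\ and \fixed, so your sketch of that step is a sound elaboration of the same route: restrict center choices to $A$, forbid radius-$0$ clusters, and track exact cluster counts that are combined additively at join nodes.
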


\subsection{\exact}\label{ssec:exactnonzero}

In this subsection, we study the version of the \MSR\ problem, which we call \exact\ that
asks for exactly $k$ clusters each containing at least two points
(this corresponds to requiring non-zero clusters).
Please refer to \autoref{sec:intro} for a formal definition of \exact.

We show that even on
metrics induced by an undirected bipartite graphs, \exact\ is
\NP-hard and \W[2]-hard parameterized by $k+\cost$.

\begin{restatable}{theorem}{thmExactMSR}\label{thm:EXACT-MSR}
  \exact\ on metrics induced by unweighted bipartite graphs is
    \textup{\NP\text{-hard}} and \textup{\W[2]\text{-hard}} parameterized by $k+\cost$. Moreover,
  assuming the Exponential Time Hypothesis, there is no algorithm for
  \exact\ running in time $f(k+\cost)\cdot n^{o(k+\cost)}$ for any
  computable function $f$.
\end{restatable}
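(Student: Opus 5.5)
We reduce from \textsc{Dominating Set} parameterized by the solution size $k$. This problem is \W[2]-hard and \NP-hard, and under \ETH\ it admits no $f(k)\cdot n^{o(k)}$-time algorithm (Cygan et al.~\cite{CyganFKLMPPS15}). The idea is to produce an unweighted bipartite graph in which a solution of \exact\ with exactly $k$ non-zero clusters and cost $\cost=2k$ must consist of $k$ clusters of radius exactly $2$. Their centers must lie on a designated ``selection'' side, and covering the ``element'' side must encode domination.

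\textbf{Construction.} Given $(G,k)$ with $V(G)=\{v_1,\dots,v_n\}$, we build a bipartite graph $G'$.
\begin{itemize}
\item Take a copy $A=\{a_1,\dots,a_n\}$ (candidate centers) and a copy $B=\{b_1,\dots,b_n\}$ (elements to dominate). Join $a_i$ to $b_j$ whenever $v_j\in N[v_i]$.
\item Add a hub vertex $z$ adjacent to every $a_i$. Then $a_i$ reaches every $a_j$ (via $z$) and every $z$-neighbor at distance $2$, and it reaches exactly the $b_j$ with $v_j\in N[v_i]$ at distance $1$. Vertices $b_j$ at distance exactly $3$ must not be reachable within radius $2$.
\item To keep the $B$-side distances large, let the $B$ vertices connect only to $A$. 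Then $d(a_i,b_j)\in\{1,3,\dots\}$ by bipartiteness, so a radius-$2$ ball around $a_i$ contains exactly $A\cup\{z\}\cup N_B(a_i)$ together with $B$-vertices at distance $1$.
\item Set $\cost=2k$.
\end{itemize}

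\textbf{Forward direction.} A dominating set $\{v_{i_1},\dots,v_{i_k}\}$ gives the $k$ clusters $(a_{i_j},2)$. Each of these is non-zero, and together they cover $A\cup\{z\}\cup B$ with total cost $2k$. If $k$ exceeds what is needed, we pad the dominating set to size exactly $k$, which is harmless.

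\textbf{Backward direction (main obstacle).} We must rule out solutions that use clusters of radius $1$ or radii distributed unevenly, for example one cluster of radius $3$ together with others of radius $1$. Since every cluster must have radius at least $1$ (non-zero) and the budget is $2k$, a radius-$1$ cluster frees up budget that another cluster can spend. To neutralise this, I would make radius $1$ useless. First, force every cluster to need radius at least $2$ by attaching to each $a_i$ and to $z$ gadget structure that only a radius-$2$ ball reaches. Second, ensure that any ball of radius $r\ge 3$ centered anywhere covers no more of $B$ than the union of a bounded number of radius-$2$ balls centered in $A$. A cleaner route, which I would try first, is to replace each $b_j$ by many pendant-path copies at a distance designed so that only centers in $A$ of radius exactly $2$ are efficient. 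Concretely, subdivide: $a_i$–$c_{ij}$–$b_j$, place centers at distance $1$ from $B$ elements, and use budget $k$ with radius-$1$ clusters. A radius-$r$ cluster then costs $r$ and behaves at most like $r$ radius-$1$ clusters, and I would argue such a cluster can be split accordingly. The exchange argument is that any cluster of radius $r$ can be replaced by $r$ clusters of radius $1$ covering the same $B$-vertices. Proving this is the delicate step and would require choosing the gadget so that balls grow slowly in $B$.

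Finally, $k'=k$ and $\cost=\bigoh(k)$, so both \W[2]-hardness and the \ETH\ bound $f(k+\cost)\cdot n^{o(k+\cost)}$ transfer. \NP-hardness follows because the reduction runs in polynomial time.
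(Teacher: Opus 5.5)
There is a genuine gap. The construction you actually specify is not a correct reduction, and the backward direction you flag as the main obstacle is false for it, not merely delicate. Because $v_j\in N[v_j]$, every $b_j$ is adjacent to $a_j$, so $d(z,b_j)=2$ for all $j$. Hence the single cluster $(z,2)$ already covers $A\cup B\cup\{z\}$. Adding $k-1$ arbitrary clusters of radius $1$ (non-zero, since $G'$ has no isolated vertices) gives exactly $k$ non-zero clusters of total cost $k+1\le 2k$, whether or not $G$ has a dominating set of size $k$.

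Your fallback ideas do not repair this. The exchange claim that a radius-$r$ cluster can be replaced by $r$ radius-$1$ clusters covering the same vertices is false in general; $(z,2)$ is a counterexample, since it covers all of $B$. The subdivided gadget $a_i$--$c_{ij}$--$b_j$ does not encode domination as described: a radius-$1$ ball around $a_i$ reaches no vertex of $B$, and one around $c_{ij}$ reaches only a single one.

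The missing idea is that the slack you are fighting comes from your choice $\cost=2k$; take the budget equal to the number of clusters instead. In an unweighted graph metric, any cluster containing two points has radius at least $1$. So exactly $k$ non-zero clusters of total cost at most $k$ forces every radius to be exactly $1$. Uneven radii are then impossible, and no gadget or exchange argument is needed, because a cover by radius-$1$ balls is precisely a dominating set.

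This is the paper's proof. It reduces from \textsc{Dominating Set} on bipartite graphs without isolated vertices, which is still \NP-hard and \W[2]-hard. It uses $X=V(G)$ itself and $\cost=k$. The absence of isolated vertices guarantees that the padding clusters $(v,1)$ in the forward direction are non-zero.

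If you prefer to start from general graphs, your double cover works with a small change. Keep $A,B$ and the hub $z$, attach a pendant vertex $y$ to $z$, and use $k+1$ clusters with budget $k+1$. The same counting argument applies: some vertex of $\{y,z\}$ must be a center. Mapping each remaining center $a_i$ or $b_i$ to $v_i$ then gives a dominating set of $G$ of size at most $k$.
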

\begin{proof}
  We provide a parameterized reduction from the \textsc{Dominating
    Set} problem on bipartite graphs that do not contain isolated vertices,
  which is well-known to be \NP-hard and \W[2]-hard
  when parameterized by the solution size~\cite{DowneyF92}.
  That is, given an instance $(G,k)$ of the \textsc{Dominating Set}
  problem (which asks whether the bipartite graph $G$ without isolated vertices has a dominating
  set of size at most $k$), we will construct the equivalent instance $(X,d,k,k)$
  of \exact{} in polynomial-time as follows. We set $X=V(G)$ and $d$
  is the metric induced by $G$. This completes the reduction, which
  can clearly be performed in polynomial time and it only remains to
  show that $(G,k)$ is a \yes-instance of \textsc{Dominating Set} if
  and only if $(X,d,k,k)$ has a solution.

  Towards showing the forward direction, let $D=\{d_1,\dotsc,d_\ell\}$
  be a dominating set of $G$ of size at most $\ell\leq k$. We claim
  that $\CCC=\{(d_1,1),\dotsc, (d_\ell,1),(v_1,1),\dotsc,(v_{k-\ell},1)\}$,
  where the $v_j$'s are arbitrary vertices in $G$ (that are pairwise
  distinct and distinct from all $d_i$') is a solution for
  $(X,d,k,k)$. Clearly, $|\CCC|=k$, the cost of $\CCC$ is at most $k$
  and since all center have at least one neighbor in $G$, it holds that
  at least two vertices are within distance at most $1$ from every center.
  Moreover, because $D$
  is a dominating set in $G$, it holds that every vertex in $G$ is
  within the radius of at least one of the clusters in $\CCC$.

  Towards showing the reverse direction, let $\CCC=\SB (x_i,r_i) \SM i
  \in [k]\SE$ be a solution for $(X,d,k,k)$. Because at least two
  vertices are within distance at most $r_i$ from $x_i$, it holds that
  $r_i\geq 1$. Moreover, since $\sum_{i \in[k]}r_i\leq \cost=k$, it
  holds that $r_i=1$. Therefore, every vertex of $G$ is a neighbor of
  some $x_i$ in $G$, which shows that the set $\{x_1,\dotsc,x_k\}$ is
  a dominating set of $G$.

  Finally, we note that the existence of a $f(k+\Delta)\cdot
  n^{o(k+\Delta)}$ time algorithm for any computable function $f$ for
  \exact\ would imply a
  $f(k) \cdot n^{o(k)}$ time algorithm for \textsc{Dominating Set}
  problem, a contradiction to \ETH, via~\cite[Corollary 14.23]{CyganFKLMPPS15}.
\end{proof}

\subsection{\fixed}\label{ssec:fixed}

In this subsection, we study the version of the \MSR\  problem, which we
coin \fixed,  where we
are additionally given a set of allowed centers.

Please refer to \autoref{sec:intro} for a formal definition of \fixed.
We start by showing that even on
metrics induced by an unweighted graph, \fixed\
parameterized by $k+\cost$ is \W[1]-hard.

\begin{restatable}{theorem}{thmFixedMSRcost}\label{thm:FixedMSR}
  \fixed\ on metrics induced by unweighted graphs is
  \textup{\W[1]\text{-hard}} parameterized by $k+\cost$. Moreover, there is no algorithm that solve \fixed\ in time $f(k+\Delta)\cdot n^{o(k+\Delta)}$ time for any computable function $f$, under \ETH.
\end{restatable}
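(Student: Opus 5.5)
The plan is to give a polynomial-time parameterized reduction from \mcc, reusing the architecture of the proof of \autoref{thm:MT} but making it unweighted. In the weighted reduction the largest edge weight is $2^k$ and the budget is $\cost=2^{k+1}-2$, both bounded by a function of $k$. So I would replace every edge of weight $t$ in the graph $G'$ of \autoref{thm:MT} by a path with $t$ edges, and set $\cost$ (possibly rescaled) to a function of $k$ only. Distances between original vertices of $G'$ are then preserved, so the parameter $k+\cost$ stays bounded by $g(k)$. The $n^{o(k+\cost)}$ lower bound is not automatic here, because $\cost$ becomes exponential in $k$. To obtain it I would replace the exponential scales by a linear family, e.g.\ giving class $i$ the scale $c\cdot i$ for a constant $c$. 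This is possible because the allowed-centers restriction removes the need for the exponential "layer-peeling" argument of \autoref{MainClm}.

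The role of the set $A$ is to replace the $W^i$ gadgets and the exponential layering. I would set $A=V(G)$, the original vertices of the \mcc\ instance, so that every center lies in some $V_i$. For each $i$, I would hang a pendant path of length $L_i$ from the apex $x_i$. Its endpoint $y_i$ is then at distance exactly $L_i+1$ from every vertex of $V_i$, and strictly farther from every vertex of another class. The budget is $\cost=\sum_i (L_i+1)$.

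\textbf{Forward direction.} A multicolored clique $\{v_1,\dots,v_k\}$ gives the clusters $(v_i,L_i+1)$. The argument of \autoref{thm:MT} then shows, after subdivision, that each non-edge vertex $w_e$ is reached by a center not incident to $e$.

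\textbf{Reverse direction.} I need that a cheap solution has exactly one center per class, with radius exactly $L_i+1$. I would argue by choosing the $L_i$ so that the classes are far apart: a single ball covering $y_i$ and $y_j$ for $i\neq j$ then costs more than $(L_i+1)+(L_j+1)$. Once the centers are fixed, the non-edge argument of the second claim in \autoref{thm:MT} applies verbatim.

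\textbf{Main obstacle.} I expect the main difficulty to be the internal subdivision vertices. A ball of radius exactly $2^i$ around $v_i$ reaches $w_e$ only at its endpoint. Interior vertices on the paths from other vertices $a\in V_i$ to $w_e$ would remain uncovered, and likewise the paths to other apexes could let balls "leak" between classes. My first attempt would be to not subdivide the $w_e$-edges at all. Instead, I would place the non-edge vertices adjacent to short paths attached to the $V_i$-side, so that every interior vertex lies within the forced radius of whichever center is chosen in its class.

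If that fails, the fallback is to double all lengths. Each edge $\{a,w_e\}$ would become a path whose first half is covered from the class $V_i$ via the apex $x_i$. Its second half would be covered by an auxiliary allowed center placed at $w_e$'s side with a fixed small radius, charged to the budget uniformly. The check that all such auxiliary vertices are covered with no extra cost beyond a fixed additive term is the step I would verify most carefully. The ETH statement then follows from \cite[Corollary 14.23]{CyganFKLMPPS15} exactly as in \autoref{thm:MT}.
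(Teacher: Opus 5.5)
Your starting point, subdividing the graph $G'$ of \autoref{thm:MT} so that $k+\cost$ stays bounded, is exactly the paper's proof. The paper then sets $A=V(G)$, keeps $\cost=2^{k+1}-2$, and calls the equivalence straightforward. Your modifications of it contain two genuine gaps.

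\textbf{Linear scales break the reverse direction.} It is not true that allowed centers remove the need for exponential layering. In any version of your graph where the forward direction works, a non-edge vertex $w_e$ between $V_i$ and $V_j$ lies within $L_i+1$ of $V_i\setminus\{u\}$ and within $L_j+1$ of $V_j\setminus\{v\}$. Hence a center $c\in V_j$ reaches class $i$ at distance $L_j+L_i+O(1)$ and $y_i$ at distance $L_j+2L_i+O(1)$.

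Now take a \textsc{No}-instance in which every pair of classes has a non-edge (e.g.\ $G$ edgeless with $|V_i|=2$). Every vertex lies within $L_h+O(1)$ of some class $V_h$, since the forward direction requires this. So a single center in $V_k$ with radius $3L_k+O(1)=3ck+O(1)$ covers the whole graph. For large $k$ this is below $\cost=\sum_i(ci+1)=\Theta(ck^2)$, and one cluster suffices because only $t\le k$ is required.

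Your pairwise inequality does not exclude one ball serving many $y_i$'s. The layer-peeling argument of \autoref{MainClm} needs $2L_\ell>\sum_{i\le\ell}(L_i+1)$, i.e.\ geometric growth such as $L_i=4^i$. Independently, linear scales still give $\cost=\Theta(k^2)$. An $f(k+\cost)\,n^{o(k+\cost)}$ algorithm would then only yield $n^{o(k^2)}$ for \mcc, which contradicts nothing. What such reductions give is $N^{o(k+\log_2\cost)}$. That is also all the paper's proof derives: its ETH paragraph concludes $n^{o(k+\log_2\cost)}$, weaker than the stated bound.

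\textbf{Your fixes for the subdivision vertices fail, although the obstacle is real.} The obstacle even applies to the paper's one-line verification. Take $w\in W^i$ and $a\in V_i\setminus\{v_i\}$. The vertex at distance $s$ from $a$ on the subdivided $w$--$a$ edge is at distance $\min(2^i+s,2^{i+1}-s)>2^i$ from $v_i$, and it is far from all other centers.

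Your ``short paths'' remedy cannot work. Vertices of $V_i$ are pairwise at distance $2$ via $x_i$, so anything within $L_i-1$ of some $a\in V_i$ is within $L_i+1$ of every center in $V_i$, including the endpoint $u$ of $e$. Then $w_e$ is covered whatever is selected, and the clique test disappears; the test in fact forces $d(a,w_e)\in\{L_i,L_i+1\}$ for $a\neq u$. Your fallback adds one auxiliary cluster per $w_e$, which is not bounded by any $g(k)$. A single shared auxiliary center near all $w_e$ would instead cover the $w_e$ themselves.

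The remedy is the opposite of short. Join $w_e$ to each $a\in V_i\setminus\{u\}$ by a path of length exactly $L_i$, and to each vertex of $V_j\setminus\{v\}$ by a path of length $L_j$. Then:
\begin{itemize}
\item an interior vertex at distance $s\le L_i-1$ from $a$ is within $2+s\le L_i+1$ of every center in $V_i$;
\item $d(a,w_e)=L_i$, while $d(u,w_e)\ge L_i+2$;
\item centers in $V_h$ with $h\notin\{i,j\}$ are at distance at least $L_h+\min(L_i,L_j)$ from $w_e$.
\end{itemize}
Combine this with $L_i=4^i$, your pendant paths, $A=V(G)$ and $\cost=\sum_i(L_i+1)$. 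This gives a correct FPT-time (not polynomial-time) reduction, hence \W[1]-hardness and the $N^{o(k+\log_2\cost)}$ lower bound.
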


%\thmFixedMSRcost*

\begin{proof}
  Consider the
  reduction from the \mcc{} problem to the \MSR{} problem provided in the
  proof of \Cref{thm:MT}. In particular, consider the
  instance $(X,d,k,\Delta)$ of \MSR\ together with the undirected
  edge-weighted graph $G'$ that induces the metric $(X,d)$ obtained
  from the instance $(G,k,(V_1,\dotsc,V_k))$ of \mcc\ in the proof of
  \Cref{thm:MT}. The idea is
  to transform $(X,d,k,\Delta)$ into an instance $(X',d',A,k,\Delta)$
  of \fixed{} such that the metric $(X',d')$ is induced by the
  undirected  graph $G''$, while preserving equivalence as
  follows. Let $G''$ be the undirected  graph obtained from
  $G'$ after subdividing every edge $e$ of $G'$ with weight $w$, $w-1$ many
  times. Moreover, let $X'=V(G'')$, let $d'$ be the metric induced by
  $G''$ and let $A=V(G)$ (recall that $V(G) \subseteq V(G')$).
  Because we
  are required to perform an exponential number of edge subdivisions,
  this is no longer a polynomial-time reduction (and
  therefore does not provide \NP-hardness) but it is a
  parameterized reduction for the parameters $k+\cost$. It is now
  straightforward to verify that $(G,k,(V_1,\dotsc,V_k))$ has a
  solution if and only if so does $(X',d',A,k,\Delta)$.

  Finally, we note that the existence of a $f(k+\Delta)\cdot
  n^{o(k+\log_2\Delta)}$ time algorithm for any computable function $f$ for
  \exact\ would imply a
  $f(k) \cdot n^{o(k)}$ time algorithm for \mcc\
  problem, a contradiction to \ETH, via~\cite[Corollary 14.23]{CyganFKLMPPS15}.
\end{proof}

The next theorem shows that \fixed\
parameterized by $k+|A|+\fvs$ is \W[1]-hard, where $A$ is the set of allowed centers and $\fvs$ is the feedback vertex set number.

\begin{restatable}{theorem}{thmFixedMSRfvs}\label{thm:FixedMSRfvs}
  \fixed\ on metrics induced by unweighted  graphs is
  \textup{\W[1]\text{-hard}} parameterized by $k+|A|+\fvs$. Moreover, there is no algorithm that solves \fixed\ in time $f(k+|A|+\fvs)\cdot n^{o(k+|A|+\fvs)}$ time for any computable function $f$, under \ETH.
\end{restatable}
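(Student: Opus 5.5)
We reduce from \mcc\ by adapting the construction from the proof of \Cref{thm:vc+k} and replacing weighted edges with paths of unit edges. The key point is that in \Cref{thm:vc+k} all centers are forced to be the $2k$ anchors $x^i_{\highl},x^i_{\lowl}$. In \fixed\ we can prescribe this directly by setting $A=\{x^1_{\highl},x^1_{\lowl},\dotsc,x^k_{\highl},x^k_{\lowl}\}$, so $|A|=2k$ and $k'=2k$. We can then drop the leaf sets $W^i_{\highl},W^i_{\lowl}$ and the exponential offsets $\omega_{\highl}(i),\omega_{\lowl}(i)$. Their only purpose was to force the choice of centers and to prevent spillover between colour classes, and subdivision into paths would make them exponentially large.

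For every $i\in[k]$ and $h\in[n]$ we connect $x^i_{\highl}$ to $v^i_h$ by a path of length $h$ and $x^i_{\lowl}$ to $v^i_h$ by a path of length $n-h+1$. For every non-edge $e=\{v^i_h,v^j_{h'}\}$ we connect $w_e$ to $x^i_{\highl}$, $x^i_{\lowl}$, $x^j_{\highl}$, $x^j_{\lowl}$ by paths of lengths $h+1$, $n-h+1$, $h'+1$ and $n-h'+1$, respectively. We set $\cost=kn$.

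The danger is shortcuts through the subdivided paths, for example $x^i_{\highl}\to v^i_h\to x^i_{\lowl}$ of length $n+1$. To kill all of these, we add a uniform offset $L=10n$ (polynomial) to every path length. Any path between two anchors then has length at least $2L$, while every intended distance is at most $L+n+1$. Hence the distance from an anchor to $v^i_h$ or to $w_e$ equals the designed path length, the analogue of \Cref{obs:edge-dist}. The budget becomes $\cost=kn+2kL$.

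The graph minus $A$ is a forest: each attached path hangs between anchors, and $w_e$ has degree $4$ but all its neighbours lie on paths to $A$. Removing $A$ therefore leaves stars centred at the vertices $w_e$ together with path pieces, and $\fvs\le|A|=2k$. Thus $k+|A|+\fvs=O(k)$, the reduction is polynomial, and the \ETH\ bound transfers from \mcc\ via \cite[Corollary 14.23]{CyganFKLMPPS15}, as in \Cref{thm:vc+k}.

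For correctness, the forward direction uses radii $L+h_i$ and $L+n-h_i$ and then mirrors \Cref{obs:non-neighbor-exclusion}. For the converse, we must show that every anchor is used with radius at least $L$. The subdivision vertices adjacent to $x^i_{\highl}$ are far from the other anchors, at distance at least about $L$. If some anchor is not a center, or has radius below $L$, covering these vertices from another anchor costs roughly an extra $L$ over the baseline. This overshoots the slack of $kn$, since $L>kn$. To make this work we take $L=10kn$ instead.

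Next we need $r^i_{\highl}+r^i_{\lowl}\ge 2L+n$ for every $i$. Otherwise some $v^i_{h}$ is uncovered by the class-$i$ clusters, and covering it from another class costs at least about $2L$ extra. Combined with the budget, this gives equality and defines the index $h_i$ exactly as in \Cref{clm:MainClm}. Each $w_e$ must likewise be covered by a class-$i$ or class-$j$ cluster, since covering it from a third class costs about $2L$ extra. The clique argument then follows as in \Cref{thm:vc+k}.

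The main obstacle I expect is verifying carefully that no short paths arise between anchors of different classes through the shared $w_e$ vertices. Such a path has length at least $2L$, which the offset handles, but the argument bounding the extra radius for cross-class covering needs a clean distance lower bound. I would state and prove that bound first as a lemma.
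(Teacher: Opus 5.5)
\paragraph{Gap in the converse direction.} The converse direction fails, and the cause is your simplification. Setting $A=\{x^i_{\highl},x^i_{\lowl}\mid i\in[k]\}$ only restricts where centres may lie. It neither forces all $2k$ anchors to be used nor stops one ball from swallowing several gadgets, and with a uniform offset $L$ this is cheap. Every vertex of the class-$i$ gadget lies on an $x^i_{\highl}$--$x^i_{\lowl}$ path of length at most $2L+n+2$. Hence the single cluster $(x^i_{\highl},2L+n+1)$ covers the whole gadget at cost only $1$ above the pair's baseline $2L+n$. Your ``extra $L$'' ignores the roughly $L$ saved at the dropped anchor.

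Globally, let $G$ have no edges between colour classes, which is a \textsc{No}-instance for $k\ge 2$. Via $w_{\{v^1_1,v^t_1\}}$ and $w_{\{v^1_1,v^t_n\}}$, the anchor $x^1_{\highl}$ is within distance $2L+4$ of both anchors of every class $t\neq 1$. So the single cluster $(x^1_{\highl},3L+n/2+5)$ covers the whole graph at cost below $4L+2n\le\cost$, and your instance is a \textsc{Yes}-instance.

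The paper does not simplify. It subdivides the weighted instance of \Cref{thm:vc+k} verbatim, keeping the pendant sets $W^i_{\highl},W^i_{\lowl}$ and the exponentially separated offsets $\omega_{\highl}(i),\omega_{\lowl}(i)$. These are what is meant to make dropping, shrinking or overextending an anchor unprofitable. It accepts paths of length $2^{\bigoh(k)}n$, i.e.\ an \FPT-time but not polynomial reduction, which is why it claims no \NP-hardness for \fixed. That your reduction would be polynomial, and so settle exactly this open question, was a warning sign.

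\paragraph{Gap in the forward direction.} Your forward direction also fails, and reverting to the paper's gadget does not fix it: the subdivision vertices must be covered too. Your radii sum to $2L+n$ per class. This exhausts the budget and is the minimum needed for the paths $x^i_{\highl}$--$v^i_h$--$x^i_{\lowl}$ of length $2L+n+1$. However, the paths $x^i_{\highl}$--$w_e$--$x^i_{\lowl}$ have length $2L+n+2$. On each of them, the vertex at distance $L+h_i+1$ from $x^i_{\highl}$ is missed by both class-$i$ balls. It is $w_e$ only if $h=h_i$, and otherwise an inner vertex at distance $|h-h_i|$ from $w_e$.

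Take a non-edge $e=\{v^i_h,v^j_{h'}\}$. The class-$j$ balls reach at most $|h'-h_j|-1$ steps past $w_e$, and symmetrically for class $i$; balls of other classes are far away. Covering both missed vertices would need $|h'-h_j|-1\ge|h-h_i|$ and $|h-h_i|-1\ge|h'-h_j|$, which is impossible. This holds even when $\{v^i_{h_i},v^j_{h_j}\}\in E(G)$.

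The same one-vertex gap appears with $\omega_{\highl}(i),\omega_{\lowl}(i)$ in place of $L$, i.e.\ in the subdivided instance of the paper's proof, whose ``straightforward'' verification passes over it. The $w_e$-gadget has to be redesigned, not just rescaled.

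Finally, the split $r^i_{\highl}=L$, $r^i_{\lowl}=L+n$ (i.e.\ $h_i=0$) meets the budget and lets $x^i_{\lowl}$ reach every $w_e$ at class $i$, so selecting no vertex is an option. Something must force $h_i\ge 1$ before a clique can be read off. \autoref{clm:MainClm} overlooks the same boundary case.
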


%\thmFixedMSRfvs*

\begin{proof}
  Consider the
  reduction from the \mcc{} problem to the \MSR{} problem provided in the
  proof of \Cref{thm:vc+k}. In particular, consider the
  instance $(X,d,k',\Delta)$ of \MSR with $k'=2k$ together with the undirected
  edge-weighted graph $G'$ that induces the metric $(X,d)$ obtained
  from the instance $(G,k,(V_1,\dotsc,V_k))$ of \mcc\ in the proof of
  \Cref{thm:vc+k}. The idea is
  to transform $(X,d,k',\Delta)$ into an instance $(X',d',A,k',\Delta)$
  of \fixed{} such that the metric $(X',d')$ is induced by the
  undirected  graph $G''$, while preserving equivalence as
  follows. Let $G''$ be the undirected  graph obtained from
  $G'$ after subdividing every edge $e$ of $G'$ with weight $w$, $w-1$ many
  times. Moreover, let $X'=V(G'')$, let $d'$ be the metric induced by
  $G''$ and let $A=\SB\low, \high \SM i \in [k]\SE$.
  Note first that $A$ is a feedback vertex set for $G''$ and therefore
  the feedback vertex number of $G''$ is at most $k'=2k$.
  Moreover, because we
  are required to perform an exponential number of edge subdivisions,
  this is no longer a polynomial-time reduction (and
  therefore does not provide \NP-hardness) but it is a
  parameterized reduction for the parameters $k+|A|+\fvs$, where $\fvs$ is
  the feedback vertex set number of $G''$.
  It is now
  straightforward to verify that $(G,k,(V_1,\dotsc,V_k))$ has a
  solution if and only if so does $(X',d',A,k',\Delta)$.

 Finally, since $k + |A| + \fvs = 5k$, the existence of an algorithm for \exact\ running in time $f(k + |A| + \fvs) \cdot n^{o(k + |A| + \fvs)}$, for any computable function $f$, would imply an $f(k) \cdot n^{o(k)}$-time algorithm for the \mcc\ problem. This would contradict \ETH\ via~\cite[Corollary 14.23]{CyganFKLMPPS15}.
\end{proof}

\section{Conclusion}

Our main contribution is a comprehensive classification of the parameterized
complexity of \MSR\ for metrics induced by undirected weighted graphs
for any combination of the parameters $k$, $\cost$, any structural
parameter at least as restrictive as vertex cover, as well as the most common
structural parameters for dense graphs such as neighborhood diversity,
modular width, and cliquewidth. We also made some contribution for the unweighted
case by showing (parameterized) complexity lower bounds for two
natural and closely related variants. While the weighted case is now
fairly well understood, we think that it would be natural and interesting to
look at what happens to the parameterized complexity on planar graphs
(or generalizations of planar graphs), where neither of our two main
hardness results apply.

Finally, a major task for future work is 
to obtain a similarly detailed (parameterized) complexity landscape
for the unweighted case. Here, the main obstacle is the open
complexity status of \MSR\ on undirected unweighted graphs, i.e., only the
version with non-zero clusters is known to be solvable in
polynomial-time and for the general case there is a randomized quasi
polynomial-time algorithm, which makes it unlikely to be
\NP-hard. While it may be difficult to settle this question, our
\W[1]-hardness result for \fixed\ may offer a way out, since it suffices to show \W[1]-hardness instead of \NP-hardness to obtain a parameterized complexity classification. Note
that we leave it open whether \fixed\ is \NP-hard on undirected (unweighted) graph metrics, but are able to
show that the problem is \W[1]-hard.

\section*{Acknowledgements}

We want to thank the reviewers of SWAT 2026 for their helpful suggestions.
One of the authors thanks Tanmay Inamdar for helpful discussions on the
problem and for pointing out relevant references.

\bibliography{ref}
\appendix
\end{document}